\newif\ifusesiam
\IfFileExists{siamart220329.cls}{%
  \usesiamtrue
  \documentclass[final]{siamart220329}%
}{%
  \usesiamfalse
  \documentclass[11pt]{article}
  \usepackage[margin=1in]{geometry}
}

\usepackage{amsmath,amssymb,amsfonts,bm,mathtools}
\ifusesiam\else
\usepackage{amsthm}
\fi
\usepackage{graphicx}
\usepackage{booktabs}
\usepackage{array}
\usepackage{placeins}
\usepackage{hyperref}
\usepackage[numbers,sort&compress]{natbib}

\makeatletter
\@ifundefined{cref@old@refstepcounter}{}{%
  \def\refstepcounter@optarg[#1]#2{%
    \cref@old@refstepcounter{#2}%
    \cref@constructprefix{#2}{\cref@result}%
    \@ifundefined{cref@#1@alias}%
      {\def\@tempa{#1}}%
      {\def\@tempa{\csname cref@#1@alias\endcsname}}%
    \protected@edef\cref@currentlabel{%
      [\@tempa][\arabic{#2}][\cref@result]%
      \csname p@#2\endcsname\csname the#2\endcsname}}%
}%
\makeatother

\graphicspath{{figures/}}
\newcommand{\dd}{\,\mathrm{d}}
\newcommand{\ii}{\mathrm{i}}

\newcommand{\calH}{\mathcal{H}}
\newcommand{\calR}{\mathcal{R}}

\newcommand{\D}{\mathrm{D}}
\newcolumntype{L}[1]{>{\raggedright\arraybackslash}p{#1}}

\ifusesiam
\else

\newtheorem{proposition}{Proposition}
\newtheorem{corollary}{Corollary}
\newenvironment{MSCcodes}{\par\noindent\textbf{MSC codes.}\ }{\par}
\fi

\providecommand{\orcid}[1]{\ignorespaces}

\begin{document}

\title{Calibrated Pressure-Observable Born and Hessian Actions for Quantum-Assisted Waveform Inversion}

\ifusesiam
\author{Guanyu Li\thanks{School of Mathematics, Jilin University,
Changchun 130012, Jilin, P.R. China; Shenzhen Loop Area Institute (SLAI),
Shenzhen, P.R. China (\email{gyli24@mails.jlu.edu.cn}).}
\and
Jiwei Jia\thanks{School of Mathematics, Jilin University,
Changchun 130012, Jilin, P.R. China; Shenzhen Loop Area Institute (SLAI),
Shenzhen, P.R. China (\email{jiajiwei@jlu.edu.cn}, corresponding author).}
\and
Yu Wang\thanks{Key Laboratory of Digital Technology in Medical Diagnostics
of Zhejiang Province, Hangzhou, Zhejiang, P.R. China
(\email{flimanadam@gmail.com}).}
\and
Yuping Duan\thanks{School of Mathematical Sciences, Beijing Normal University,
Beijing 100875, P.R. China (\email{doveduan@gmail.com}).}}
\else
\author{Guanyu Li\textsuperscript{a,b}
\and
Jiwei Jia\textsuperscript{a,b}
\and
Yu Wang\textsuperscript{c}
\and
Yuping Duan\textsuperscript{d}\\[0.5em]
\textsuperscript{a}School of Mathematics, Jilin University, Changchun 130012, Jilin, P.R. China\\
\textsuperscript{b}Shenzhen Loop Area Institute (SLAI), Shenzhen, P.R. China\\
\textsuperscript{c}Key Laboratory of Digital Technology in Medical Diagnostics of Zhejiang Province, Hangzhou, Zhejiang, P.R. China\\
\textsuperscript{d}School of Mathematical Sciences, Beijing Normal University, Beijing 100875, P.R. China\\
Corresponding author: \texttt{jiajiwei@jlu.edu.cn}}
\fi

\maketitle

\begin{abstract}
We construct a pressure-consistent operator-and-readout interface for Born,
adjoint, and Gauss--Newton actions in constant-density acoustic full-waveform
inversion (FWI) using Schr\"odingerised propagation. The energy variables
\(\pi=c^{-1}\partial_tu\) and \(\bm q=\nabla u\) give an auxiliary-space
Hamiltonian, while physical pressure \(p=c\pi\) depends explicitly on
wavespeed. Its directional derivative
\(\D(c\pi)[c_0](\delta c)=c_0\delta\pi+\delta c\,\pi_0\)
contains both a propagated wavefield sensitivity and a direct
receiver-calibration term. We combine Duhamel differentiation with the
receiver-row derivative and retain both contributions in the Born map, its
adjoint, and the Gauss--Newton normal action. We prove a conditional
consistency estimate with a periodic second-order finite-difference
specialization. A resource model accounts for coefficient access,
state preparation, input loading, LCU normalization, quadrature, and
selected-output measurement. A compiled
nine-qubit instance realizes structured preparation, product-formula
propagation, a derivative-LCU block, and calibrated pressure-overlap measurements.
Bernoulli samples from
ideal-circuit probabilities drive a four-parameter hybrid inversion. A two-qubit
VQLS circuit then represents the normalized update direction;
normal-system assembly, line search, and model refresh remain classical.
We verify the discrete Born, adjoint, and Gauss--Newton normal actions
using finite differences, independently coded tangent and reverse-adjoint
recurrences, autodiff JVP/VJP evaluations, and explicit-Jacobian
comparisons. Smooth periodic refinement confirms
second-order convergence, whereas omitting receiver calibration leaves an
order-one Born error and substantially changes the regularized
Gauss--Newton direction. All ten predeclared finite-shot runs reduce the
initial model error. The resulting finite-dimensional construction
specifies the physical-pressure derivative and selected-output measurements
needed to connect Schr\"odingerised propagation to a local FWI update.
\end{abstract}

\begin{keywords}
full-waveform inversion, Born modelling, Schr\"odingerisation, quantum algorithms, inverse scattering, Hessian action
\end{keywords}

\begin{MSCcodes}
65M32, 65M12, 65M06, 81P68, 35L05, 65F22
\end{MSCcodes}

\section{Introduction}

Full-waveform inversion (FWI) repeatedly evaluates forward pressure data and
applies Born, adjoint, and Gauss--Newton actions to recover wavespeed models
from measured waveforms. We construct a pressure-consistent operator-and-readout
formulation of these actions from Schr\"odingerised acoustic propagation.
The construction turns on the energy variables
\(\pi=c^{-1}\partial_tu\) and \(\bm q=\nabla u\): they yield an
auxiliary-space Hamiltonian, but physical pressure is the
coefficient-dependent observable \(p=c\pi\). Consequently, its directional
derivative is
\[
  \D(c\pi)[c_0](\delta c)
  =c_0\,\delta\pi+\delta c\,\pi_0 ,
\]
where the first term is the propagated sensitivity and the second is the
direct receiver-calibration contribution. Proposition~\ref{prop:sisc-receiver-calibration}
shows that both enter the Born map and therefore determine its adjoint and
Gauss--Newton normal action.

FWI is used in seismic imaging, nondestructive evaluation, and ultrasound
computed tomography \cite{Virieux2009,Wang2015WISE,Lucka2022UST}. Its
computational cost is dominated by repeated wave-equation and sensitivity
solves: gradient methods require forward and adjoint propagation, while
Gauss--Newton and related Hessian-informed methods also apply \(Jv\),
\(J^\top r\), and \(J^\top Jv\)
\cite{Bunks1995,Plessix2006,Pratt1998,Metivier2013}. These costs motivate
quantum representations of wave evolution. Quantum algorithms provide
primitives for sparse linear systems, Hamiltonian simulation, and linear
differential equations under explicit access, conditioning, preparation,
and measurement assumptions
\cite{Harrow2009,Childs2017,Berry2015,LowChuang2019,
Berry2017LDE,ChildsLiuOstrander2021}. A Hamiltonian propagation primitive,
however, does not by itself define the differentiated physical receiver
row, its adjoint under the chosen inner products, or the measured quantities
needed for a local update.

Schr\"odingerisation embeds broad classes of linear equations in
Schr\"odinger-type dynamics by adding an auxiliary variable
\cite{JinLiuYu2023PRA,JinLiuYu2024PRL}, and circuit constructions have been
studied for representative partial differential equations and wave systems
\cite{HuJinLiuZhang2024,JinZhang2025}. Related approaches embed parameterized
circuits in a physics-informed FWI architecture
\cite{NguyenVashisthTura2026} or coherently block-encode a PDE-constrained
objective for quantum optimization \cite{SatoKatoYanoItoYamamoto2025}. The
present work instead addresses the coefficient-dependent physical-pressure
derivative and constructs compatible Born, adjoint, and Gauss--Newton actions
together with the selected pressure and Born observables required by a
classically orchestrated local update.

We combine Duhamel differentiation with the receiver-row derivative to obtain
the complete pressure Born map. We prove a conditional consistency estimate
and a periodic second-order finite-difference specialization, and we formulate
a resource model that separately accounts for access, state preparation,
input loading, normalization, quadrature, and selected-output measurement.
The numerical evidence proceeds from independent discrete-action checks,
through identical-matrix statevector agreement, to a compiled finite-shot
four-parameter hybrid update. Sections~2--4 develop the pressure map,
Schr\"odingerised actions, analysis, and resource model; Section~5 gives the
compiled realization; and Sections~6--8 present the evidence, discussion, and
conclusion.

\section{Acoustic Born Primitives and Calibrated Pressure}
\label{sec:sisc-acoustic-born-pressure}

\subsection{Born, adjoint, and Gauss--Newton actions}

For a constant-density acoustic model, let \(u_s(c)\) solve
\begin{equation}
  c(x)^{-2}\partial_{tt}u_s-\Delta u_s=f_s,\qquad
  d_{s,j}(t;c)=\ell_j(\partial_t u_s(t;c)).
\end{equation}
The scalar \(u_s\) is a velocity potential, so the measured pressure trace is represented by \(p_{\rm phys}=\partial_t u_s\), up to a constant-density scaling and sign convention. Stacking sources, receivers, and time samples gives the pressure-data map \(F(c)\). With observed data \(d^{\rm obs}\), the local least-squares objective is
\begin{equation}
  \Phi(c)=\frac12\|F(c)-d^{\rm obs}\|_{W_d}^2,
  \qquad
  \|r\|_{W_d}^2=\langle W_dr,r\rangle .
  \label{eq:sisc-fwi-loss}
\end{equation}
The numerical diagnostics below use \(W_d=I\), but writing the weighted form makes clear where receiver-time quadrature, source weights or noise covariance would enter.

Using the first-order variables \(w=(\pi,\bm q)^\top=(c^{-1}\partial_t u,\nabla u)^\top\), physical pressure is \(\partial_t u=c\pi\). For fixed physical forcing \(f_s\), the corresponding first-order source is coefficient dependent:
\begin{equation}
  \partial_t w=A(c)w+b_s(c,t),\qquad
  b_s(c,t)=\begin{bmatrix}c f_s(t)\\0\end{bmatrix}.
  \label{eq:sisc-first-order-source}
\end{equation}
Its tangent equation therefore contains the source term
\(\D b_s[c_0](\delta c)=(\delta c\,f_s,0)^\top\);
Section~\ref{sec:sisc-schrodingerised-stack} states the fixed-source
convention used by the finite-dimensional diagnostics. Under either
convention, a perturbation \(\delta c\) produces the receiver derivative
\begin{equation}
  (L\delta c)_{s,j,t}=\ell_j\!\left(\delta c\,\pi_s^0+c_0\,\delta\pi_s\right),
  \label{eq:sisc-pressure-born}
\end{equation}
where \(\delta\pi_s\) is the tangent field for the chosen source
convention. The first term in \eqref{eq:sisc-pressure-born} is the
receiver-calibration derivative; the tests below show an \(O(1)\)
finite-difference error when it is omitted.

Equivalently, at the second-order potential level, if \(u_s=u_{0,s}+\delta u_s+O(\|\delta c\|^2)\), then
\begin{equation}
  c_0^{-2}\partial_t^2\delta u_s-\Delta\delta u_s
  =
  \frac{2\,\delta c}{c_0^3}\partial_t^2 u_{0,s},
  \label{eq:sisc-born-pde}
\end{equation}
and \((L\delta c)_{s,j,t}=\ell_j(\partial_t\delta u_s(t))\) after applying the pressure receivers. Formula \eqref{eq:sisc-pressure-born} is the corresponding receiver linearization written in the first-order energy variables needed by the Schr\"odingerised construction. Equation~\eqref{eq:sisc-born-pde} uses fixed physical forcing; a first-order realization of that map must also retain the source derivative in \eqref{eq:sisc-first-order-source}.

\paragraph{Discrete adjoints}
Let the data and model inner products be
\[
  \langle a,b\rangle_d=a^\top W_db,
  \qquad
  \langle u,v\rangle_m=u^\top M_mv,
\]
where \(W_d\) and \(M_m\) are symmetric positive definite. The adjoint of \(L\) under these inner products is
\begin{equation}
  L^*=M_m^{-1}L^\top W_d .
  \label{eq:sisc-weighted-adjoint}
\end{equation}
Therefore, with \(r=F(c_0)-d^{\rm obs}\),
\begin{equation}
  g(c_0)=L^*r,
  \qquad
  H_{\rm GN}(c_0)v=L^*Lv.
\end{equation}
Throughout, ``Hessian action'' denotes this Gauss--Newton normal action
\(L^*Lv\).
Except for the dedicated weighted-adjoint check in Table~\ref{tab:sisc-matched-pressure-baseline}, the reported finite-dimensional identities use \(W_d=M_m=I\) after pressure scaling, so \(L^*=L^\top\). All derivatives are Fr\'echet derivatives with respect to wavespeed \(c\). Slowness and squared-slowness parameterizations produce different Jacobian and adjoint factors through the chain rule.

\begin{proposition}[Receiver-calibration derivative]
\label{prop:sisc-receiver-calibration}
Let \(X\) be the acoustic state space, let \(\ell_j\in X^*\) be independent of \(c\), and suppose \(w(c)=(\pi(c),\bm q(c))\) is Fr\'echet differentiable at \(c_0\). Define \(\delta w=\D w[c_0](\delta c)=(\delta\pi,\delta\bm q)\). For the physical receiver row \(\Gamma_j(c)w=\ell_j(c\pi)\),
\begin{equation}
  \D[\Gamma_j(c)w(c)]_{c_0}(\delta c)
  =\ell_j(c_0\delta\pi)+\ell_j(\delta c\,\pi_0).
  \label{eq:sisc-calibration-lemma}
\end{equation}
\end{proposition}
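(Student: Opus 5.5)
The plan is to write the physical receiver row as a composition of three maps and apply the chain rule. The outer map is the bounded linear functional \(\ell_j\). The middle map is the pointwise product \(\mathcal{M}:(c,\pi)\mapsto c\pi\). The inner map is \(c\mapsto(c,\pi(c))\), where \(\pi(c)=P w(c)\) and \(P:(\pi,\bm q)\mapsto\pi\) is the bounded linear projection onto the first component.

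First I will fix the functional setting. The wavespeed perturbations should lie in a space \(Y\), for instance \(L^\infty\) or \(C^0\) of the spatial domain, in finite dimensions simply \(\mathbb{R}^n\), such that multiplication \(Y\times X_\pi\to X_\pi\) is a bounded bilinear map, \(\|c\pi\|\le C\|c\|_Y\|\pi\|\). This is the one assumption implicit in the statement that I will make explicit. Under it, \(\mathcal{M}\) is Fr\'echet differentiable with derivative \(\D\mathcal{M}[(c_0,\pi_0)](\delta c,\delta\pi)=c_0\delta\pi+\delta c\,\pi_0\). The verification is the usual bilinear identity
\[
(c_0+\delta c)(\pi_0+\delta\pi)-c_0\pi_0=c_0\delta\pi+\delta c\,\pi_0+\delta c\,\delta\pi,
\]
where the remainder satisfies \(\|\delta c\,\delta\pi\|\le C\|\delta c\|\|\delta\pi\|=o(\|(\delta c,\delta\pi)\|)\).

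Next, since \(w\) is Fr\'echet differentiable at \(c_0\) and \(P\) is bounded linear, \(\pi(c)\) is differentiable at \(c_0\) with derivative \(P\,\delta w=\delta\pi\). Hence \(c\mapsto(c,\pi(c))\) has derivative \(\delta c\mapsto(\delta c,\delta\pi)\). By the chain rule, \(c\mapsto c\,\pi(c)\) has derivative \(c_0\delta\pi+\delta c\,\pi_0\). Because \(\ell_j\in X^*\) is linear, continuous, and independent of \(c\), it commutes with differentiation. Distributing \(\ell_j\) by linearity then gives \eqref{eq:sisc-calibration-lemma}.

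The only real obstacle is the bilinear remainder estimate, that is, making sure the product \(c\pi\) stays in the domain of \(\ell_j\) with a continuous bound. I expect to handle it by taking \(Y\) to be a multiplier algebra for the pressure component (\(L^\infty\) acting on \(L^2\), or the finite-dimensional diagonal case used in the diagnostics). For the \(o(\|\delta c\|)\) bound I will also use \(\|\delta\pi\|=O(\|\delta c\|)\), which follows from differentiability of \(w\). Everything else is routine linearity.
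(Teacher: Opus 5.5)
Your proposal is correct and follows the paper's own proof, which just applies the product rule to \(\ell_j(c\,\pi(c))\). Your chain-rule decomposition through the bounded bilinear multiplication \((c,\pi)\mapsto c\pi\) spells that product rule out in detail, and stating the multiplier assumption on the perturbation space explicitly usefully makes precise what the paper leaves implicit.
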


\begin{proof}
The identity follows from the product rule applied to
\(\Gamma_j(c)w(c)=\ell_j(c\pi(c))\).
\end{proof}

Thus the physical-pressure derivative contains both the propagated tangent
term and the receiver-row derivative. Omitting
\(\ell_j(\delta c\,\pi_0)\) differentiates the frozen observable
\(c_0\pi(c)\) rather than the physical pressure \(c\pi(c)\).
Section~\ref{sec:numerical-evidence} quantifies the resulting Born and
update-level errors.

\paragraph{Fixed-interface derivative convention}
The continuous derivative keeps the material partition fixed; moving
interfaces require additional shape-derivative terms. High-contrast
arrays are therefore treated as finite-dimensional pixel models, and
their checks differentiate the implemented map with respect to grid
values \(c_h\).

\section{Schr\"odingerised Pressure-Observable Born Representation}
\label{sec:sisc-schrodingerised-stack}

\subsection{Auxiliary-space Hamiltonian}

The energy scaling serves two roles. The variable
\(\pi=c^{-1}\partial_tu\) transfers the coefficient-weighted acoustic
energy to standard \(L^2\) coordinates and yields the skew-adjoint block
form below. It also makes physical pressure the coefficient-dependent
row \(c\pi\). Thus the
receiver-calibration term is the receiver-side consequence of the same
scaling that enables the Hamiltonian construction.

Ignoring source injection, the first-order acoustic variables satisfy
\begin{equation}
  \partial_t w=A(c)w,\qquad
  A(c)=
  \begin{bmatrix}
  0 & c\nabla\cdot\\
  \nabla(c\,\cdot) & 0
  \end{bmatrix}.
  \label{eq:sisc-acoustic-A}
\end{equation}
For periodic or energy-conserving boundary conditions this operator is skew-adjoint in the energy variables. Absorbing layers, damping and stabilized closures add non-skew components, so we write \(A(c)=A_{\rm sk}(c)+A_{\rm h}(c)\), with \(A_{\rm sk}^*=-A_{\rm sk}\) and \(A_{\rm h}^*=A_{\rm h}\). Schr\"odingerisation introduces an auxiliary coordinate \(p_a\) and the Hermitian Hamiltonian
\begin{equation}
  \calH_{\rm ac}(c)=\ii A_{\rm sk}(c)\otimes I_{p_a}
  -\ii A_{\rm h}(c)\otimes D_{p_a},\qquad
  U_c(t)=e^{-\ii \calH_{\rm ac}(c)t}.
  \label{eq:sisc-hamiltonian}
\end{equation}
In the finite-dimensional implementation, \(A_{{\rm sk},h}\) is the centered first-order acoustic coupling on the physical grid, \(A_{{\rm h},h}\) contains the coefficient-dependent symmetric and damping terms, and \(D_{p_a,h}\) is the centered derivative on the auxiliary grid. Thus
\begin{equation}
  H_h(c_h)=\ii A_{{\rm sk},h}(c_h)\otimes I
  -\ii A_{{\rm h},h}(c_h)\otimes D_{p_a,h},
  \label{eq:sisc-fd-hamiltonian}
\end{equation}
with the corresponding coefficient derivative
\begin{equation}
  \D H_h[c_h](\delta c_h)=
  \ii\,\D A_{{\rm sk},h}[c_h](\delta c_h)\otimes I
  -\ii\,\D A_{{\rm h},h}[c_h](\delta c_h)\otimes D_{p_a,h}.
  \label{eq:sisc-fd-hamiltonian-derivative}
\end{equation}
For the two-dimensional periodic stencil used in the calibrated matrix and statevector diagnostics, let \(D_x,D_z\in\mathbb R^{N\times N}\), \(N=n_xn_z\), be centered difference matrices with \(D_x^\top=-D_x\) and \(D_z^\top=-D_z\). With \(C_h=\operatorname{diag}(c_h)\), the skew acoustic block is
\begin{equation}
  A_{{\rm sk},h}(c_h)
  =
  \begin{bmatrix}
  0 & C_hD_x & C_hD_z\\
  D_xC_h & 0 & 0\\
  D_zC_h & 0 & 0
  \end{bmatrix},
  \qquad
  A_{{\rm sk},h}(c_h)^\top=-A_{{\rm sk},h}(c_h).
  \label{eq:sisc-Ask-block}
\end{equation}
The damped implementation sets \(A_{{\rm h},h}=-\Sigma_h\), where \(\Sigma_h\succeq0\) is diagonal and repeated over the acoustic components, so
\begin{equation}
  H_h(c_h)=\ii A_{{\rm sk},h}(c_h)\otimes I_{p_a}
  +\ii\Sigma_h\otimes D_{p_a,h}.
  \label{eq:sisc-Hh-implemented}
\end{equation}
This is Hermitian because \(A_{{\rm sk},h}^\top=-A_{{\rm sk},h}\),
\(\Sigma_h^\top=\Sigma_h\), and
\(D_{p_a,h}^\top=-D_{p_a,h}\). The continuous statements assume that
\(\ii A_{\rm sk}\otimes I\), \(-\ii A_{\rm h}\otimes D_{p_a}\), and
their sum are self-adjoint on a common dense domain; relative
boundedness gives one sufficient condition. The numerical results use
the finite-dimensional Hermitian matrix \(H_h\), for which the displayed
identity is exact. After discretizing the physical variables but before
discretizing \(p_a\), the warped equation is
\(\partial_t\psi_h+A_{{\rm h},h}\partial_{p_a}\psi_h
=A_{{\rm sk},h}\psi_h\) \cite{JinLiuYu2023PRA}. Since
\(A_{{\rm h},h}\) is Hermitian, a standard energy estimate shows that
\(\lambda_{+,h}:=\max\{0,\lambda_{\max}(A_{{\rm h},h})\}\) bounds the
maximal right-going auxiliary characteristic speed. Consequently, on the
untruncated auxiliary line, any recovery functional supported in
\(p_a\geq p_\star>\lambda_{+,h}T\) is unaffected, up to time \(T\), by
how the initial warp is continued into \(p_a<0\). For the periodic auxiliary
grid used in the numerical specialization, finite-domain and periodization
effects are included in the assumed \(C e^{-\alpha L_p}\) auxiliary
truncation bound of Proposition~\ref{prop:sisc-periodic}. In the present
specialization, \(A_{{\rm h},h}=-\Sigma_h\preceq0\) gives
\(\lambda_{+,h}=0\), so the recovery support need not be shifted rightward
by an amount growing with \(T\). Thus \(p_\star\) can be fixed independently
of \(T\), and the recovery factor \(e^{p_\star}\) introduces no
\(T\)-dependent exponential amplitude cost. Proposition~\ref{prop:sisc-recovery-transfer}
controls the remaining background and derivative recovery-map discrepancies
abstractly. A time-domain PML must first be assembled as an augmented
generator, including its split-field or auxiliary variables, before applying
this decomposition. The local derivative freezes its damping profile
\(\sigma\); coefficient-dependent damping would add
\(\D\sigma[c_0](\delta c)\) rows.

The derivative of the acoustic block is explicit:
\begin{equation}
  \D A_{{\rm sk},h}[c_h](\delta c_h)
  =
  \begin{bmatrix}
  0 & \Delta C_hD_x & \Delta C_hD_z\\
  D_x\Delta C_h & 0 & 0\\
  D_z\Delta C_h & 0 & 0
  \end{bmatrix},
  \qquad
  \Delta C_h=\operatorname{diag}(\delta c_h).
  \label{eq:sisc-dAsk-block}
\end{equation}
Equivalently, for \(w=(\pi,q_x,q_z)^\top\), the derivative action can be viewed as the linear map
\begin{equation}
  \delta c_h
  \mapsto
  \begin{bmatrix}
  \operatorname{diag}(D_xq_x+D_zq_z)\\
  D_x\operatorname{diag}(\pi)\\
  D_z\operatorname{diag}(\pi)
  \end{bmatrix}
  \delta c_h .
  \label{eq:sisc-dAsk-action}
\end{equation}
This block-level definition is the finite-dimensional object tested below.

For a source state \(|s\rangle\), the calibrated pressure observable is
\begin{equation}
  d^{\rm Sch}_{s,j}(t;c)=\Gamma_j(c)\calR_{p_a}U_c(t)|s\rangle,\qquad
  \Gamma_j(c)w=\ell_j(c\pi).
  \label{eq:sisc-q-data}
\end{equation}
Here \(\calR_{p_a}\) denotes the recovery or auxiliary averaging operation
used to map the extended state back to the physical acoustic variables.
The tests and Proposition~\ref{prop:sisc-periodic} adopt a centered
periodic auxiliary derivative and a compact receiver profile as the
finite-dimensional specialization.

\paragraph{Source convention}
The finite-dimensional diagnostics represent each source history by a fixed collection of normalized first-order pressure states \(v_{s,k}\), independent of \(c\), and differentiate
\begin{equation}
  c\longmapsto
  \Gamma_j(c)\calR_{p_a}\sum_k U_c(t-t_k)v_{s,k}.
  \label{eq:sisc-fixed-source-state-map}
\end{equation}
Their Duhamel derivative therefore contains the Hamiltonian and receiver
derivatives. Differentiating the fixed-physical-forcing map in
\eqref{eq:sisc-first-order-source} additionally contributes the embedded
term \((\delta c\,f_s,0)^\top\). Supplementary finite differences verify
both conventions and show an \(O(1)\) error when this source term is excluded
from the fixed-physical-forcing map.

\begin{proposition}[Transfer of auxiliary-recovery error to pressure Born rows]
\label{prop:sisc-recovery-transfer}
Let \(E_s:\mathcal X\to\mathcal H_{\rm ext}\) be a coefficient-independent source embedding and set \(v_s=E_sw_s\). Let \(S(c;t):\mathcal X\to\mathcal X\) be the physical first-order acoustic solution operator under the same fixed source-state convention. Assume, uniformly for \(0\le t\le T\),
\begin{align}
  \|\calR_{p_a}U_c(t)E_s-S(c;t)\|_{\mathcal L(\mathcal X)}
  &\le \varepsilon_R, \nonumber\\
  \big\|\D[\calR_{p_a}U_c(t)E_s]_{c_0}
  -\D[S(c;t)]_{c_0}\big\|_{\mathcal L(\mathcal C_r,\mathcal L(\mathcal X))}
  &\le \varepsilon_R.
  \label{eq:sisc-recovery-assumption}
\end{align}
Assume also that \(\|\Gamma_j(c_0)\|_{\mathcal X^*}\le C_\Gamma\) and
\[
  \|\D\Gamma_j[c_0](\delta c)\|_{\mathcal X^*}
  \le C_{\D\Gamma}\|\delta c\|_{\mathcal C_r}.
\]
Then
\begin{align}
 &\left|
 \D\!\left[\Gamma_j(c)\calR_{p_a}U_c(t)E_sw_s\right]_{c_0}(\delta c)
 -\D\!\left[\Gamma_j(c)S(c;t)w_s\right]_{c_0}(\delta c)
 \right| \nonumber\\
 &\qquad\le
 (C_\Gamma+C_{\D\Gamma})\varepsilon_R
 \|\delta c\|_{\mathcal C_r}\|w_s\|_{\mathcal X}.
\end{align}
A stacked data-norm bound follows after including the stated source, receiver, and time weights.
\end{proposition}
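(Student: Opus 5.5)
The argument is a product-rule split followed by a triangle inequality, in the same spirit as Proposition~\ref{prop:sisc-receiver-calibration}. Write \(P(c;t):=\calR_{p_a}U_c(t)E_s\) and \(S(c;t)\) for the two solution operators, and \(\Delta(c):=P(c;t)-S(c;t)\in\mathcal L(\mathcal X)\). Both maps \(c\mapsto\Gamma_j(c)P(c;t)w_s\) and \(c\mapsto\Gamma_j(c)S(c;t)w_s\) are compositions of a bilinear pairing \(\mathcal X^*\times\mathcal X\to\mathbb R\) with Fr\'echet differentiable maps. This differentiability is implicit in the hypotheses, since both \(\D P\) and \(\D\Gamma_j\) appear there. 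The product rule at \(c_0\) therefore gives
\begin{equation*}
  \D[\Gamma_jPw_s]_{c_0}(\delta c)
  =\D\Gamma_j[c_0](\delta c)\,P(c_0;t)w_s
  +\Gamma_j(c_0)\,\D P[c_0](\delta c)\,w_s ,
\end{equation*}
and the same formula holds with \(S\) in place of \(P\).

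Subtracting the two expansions, the difference of derivatives becomes
\begin{equation*}
  \D\Gamma_j[c_0](\delta c)\,\Delta(c_0)w_s
  +\Gamma_j(c_0)\,\big(\D P[c_0]-\D S[c_0]\big)(\delta c)\,w_s .
\end{equation*}
The receiver-calibration term appears identically in both expansions. It therefore does not cancel: it multiplies the background recovery error \(\Delta(c_0)\). This is exactly why the constant \(C_{\D\Gamma}\) enters the bound.

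Each piece is bounded by the dual pairing inequality \(|\phi(x)|\le\|\phi\|_{\mathcal X^*}\|x\|_{\mathcal X}\).
\begin{itemize}
\item For the first term, apply the receiver-derivative bound \(\|\D\Gamma_j[c_0](\delta c)\|_{\mathcal X^*}\le C_{\D\Gamma}\|\delta c\|_{\mathcal C_r}\). Then the first line of \eqref{eq:sisc-recovery-assumption} evaluated at \(c=c_0\) gives \(\|\Delta(c_0)w_s\|_{\mathcal X}\le\varepsilon_R\|w_s\|_{\mathcal X}\). The result is \(C_{\D\Gamma}\varepsilon_R\|\delta c\|_{\mathcal C_r}\|w_s\|_{\mathcal X}\).
\item For the second term, use \(\|\Gamma_j(c_0)\|_{\mathcal X^*}\le C_\Gamma\). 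The operator-norm assumption in the second line of \eqref{eq:sisc-recovery-assumption}, read as a norm on \(\mathcal L(\mathcal C_r,\mathcal L(\mathcal X))\), gives \(\|(\D P-\D S)[c_0](\delta c)\,w_s\|_{\mathcal X}\le\varepsilon_R\|\delta c\|_{\mathcal C_r}\|w_s\|_{\mathcal X}\). The result is \(C_\Gamma\varepsilon_R\|\delta c\|_{\mathcal C_r}\|w_s\|_{\mathcal X}\).
\end{itemize}
Adding the two bounds yields the stated estimate, uniformly in \(t\in[0,T]\).

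For the stacked data-norm remark, apply the pointwise bound row by row. Then sum the squares with the source, receiver, and time weights in \(W_d\), which gives a factor equal to the weighted \(\ell^2\) norm of the collection \(\|w_s\|\), or a count-dependent constant.

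No step is a real obstacle. The one point needing care is the functional setting: the pairing \(\Gamma_j(c)x\) must be treated as a bounded bilinear map so that the product rule applies in Fr\'echet form. In particular, \(\D\Gamma_j[c_0](\delta c)\) is the functional \(x\mapsto\ell_j(\delta c\,\pi_x)\), which lies in \(\mathcal X^*\) by the assumed bound. One must also check that assumption \eqref{eq:sisc-recovery-assumption} is used at the base point \(c_0\), and not uniformly in \(c\).
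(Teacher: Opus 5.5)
Your proposal is correct and matches the paper's proof: the paper also sets \(E(c,t)=\calR_{p_a}U_c(t)E_s-S(c;t)\) and uses the product rule to obtain the exact two-term decomposition \(\Gamma_j(c_0)\D E[c_0,t](\delta c)w_s+\D\Gamma_j[c_0](\delta c)E(c_0,t)w_s\). It then bounds these terms by \(C_\Gamma\varepsilon_R\) and \(C_{\D\Gamma}\varepsilon_R\) times \(\|\delta c\|_{\mathcal C_r}\|w_s\|_{\mathcal X}\), with weighted summation giving the stacked estimate.
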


\begin{proof}
Set
\(E(c,t)=\calR_{p_a}U_c(t)E_s-S(c;t)\).
The product rule gives the exact decomposition
\[
 \Gamma_j(c_0)\D E[c_0,t](\delta c)w_s
 +\D\Gamma_j[c_0](\delta c)E(c_0,t)w_s .
\]
The two terms are bounded by \(C_\Gamma\varepsilon_R\) and
\(C_{\D\Gamma}\varepsilon_R\), respectively, times
\(\|\delta c\|_{\mathcal C_r}\|w_s\|_{\mathcal X}\). Weighted
summation gives the stacked estimate.
\end{proof}

\subsection{Duhamel Born map with calibrated pressure rows}

Under the coefficient-independent source-state convention \eqref{eq:sisc-fixed-source-state-map}, the first variation follows from Duhamel's formula,
\begin{equation}
  \delta\psi_s(T)=
  -\ii\int_0^T U_{c_0}(T-\tau)\D\calH_{\rm ac}[c_0](\delta c)
  U_{c_0}(\tau)|s\rangle\,\dd\tau ,
\end{equation}
and the calibrated Born row is
\begin{equation}
  (L_{\rm Sch}^{\rm ac}\delta c)_{s,j,T}
  =\Gamma_j(c_0)\calR_{p_a}\delta\psi_s(T)
  +\D\Gamma_j[c_0](\delta c)\calR_{p_a}U_{c_0}(T)|s\rangle .
  \label{eq:sisc-sch-born}
\end{equation}
The second term is the pressure receiver derivative in the Schr\"odingerised representation. The adjoint and Hessian actions are defined with the Euclidean inner product after time, source and receiver discretization:
\begin{equation}
  \nabla\Phi_{\rm Sch}(c_0)=L_{\rm Sch}^{\rm ac,*}r,\qquad
  H_{{\rm GN},{\rm Sch}}(c_0)v
  =L_{\rm Sch}^{\rm ac,*}L_{\rm Sch}^{\rm ac}v .
\end{equation}

The formulation uses operator actions throughout. A forward action evaluates the Duhamel integral for a perturbation \(v\); an adjoint action reverses the same time, source, and receiver contractions against a residual \(r\); a Gauss--Newton action composes the two. Hereafter \(J\in\mathbb R^{M\times N}\) denotes the Euclidean coordinate representation of the discrete calibrated Born map \(L_{{\rm Sch},h}^{\rm ac}\), and \(J^\top\) is its Euclidean adjoint. The local inverse calculation uses the three matrix-free actions \(Jv\), \(J^\top r\), and \(J^\top Jv\).

\subsection{Adjoint convention and Hessian-action interface}

Let \(d\in\mathbb R^M\) collect all source, receiver and time samples after the pressure scaling has been restored. For the least-squares objective
\begin{equation}
  \Phi(c)=\frac12\|F(c)-d^{\rm obs}\|_2^2,
\end{equation}
the local residual is \(r=F(c_0)-d^{\rm obs}\), and the local quadratic model is
\begin{equation}
  \Phi(c_0+v)\approx \Phi(c_0)+\langle L v,r\rangle
  +\frac12\langle v,L^*Lv\rangle .
\end{equation}
Replacing \(L\) by \(L_{\rm Sch}^{\rm ac}\) preserves the local quadratic
FWI model only when the corresponding adjoint and normal-action identities
also hold. The tests therefore check finite-difference Born consistency,
adjoint identities
\begin{equation}
  \langle L_{\rm Sch}^{\rm ac}v,r\rangle_{\mathbb R^M}
  =\langle v,L_{\rm Sch}^{\rm ac,*}r\rangle_{\mathbb R^N},
\end{equation}
and Hessian symmetry
\begin{equation}
  \langle u,L_{\rm Sch}^{\rm ac,*}L_{\rm Sch}^{\rm ac}v\rangle
  =\langle L_{\rm Sch}^{\rm ac}u,L_{\rm Sch}^{\rm ac}v\rangle .
\end{equation}
These identities are the mathematical reason for testing the Born map,
adjoint map, and Hessian action as one coupled operator chain.

\subsection{State, receiver and measurement preparation}

Here a \emph{selected-output measurement} denotes a prescribed subset or
linear sketch of receiver--time pressure functionals used in a local
update. The adjective ``selected'' describes this prescribed output target;
the LCU \textsc{Select} unitary, selector-zero block, and postselection are
separate concepts.

\paragraph{State and receiver preparation}
In the finite-dimensional diagnostics, source states are normalized pressure source profiles tensored with the auxiliary profile. Receiver states are frozen at the background \(c_0\) and encode the row \(c_0(x_j)e_{\pi,x_j}\otimes\chi_{\rm rec}\), with the physical pressure scale recorded separately. The assembled-matrix statevector experiments apply the corresponding finite-dimensional states and operators exactly. Section~\ref{sec:sisc-gate-level} additionally compiles structured small-instance source and receiver states, product-formula evolution, a derivative LCU block, and pressure-overlap measurements. The sparse-oracle construction used for asymptotic resource statements remains specified separately below.

\paragraph{Measurement model}
For a real receiver functional \(a=\operatorname{Re}\langle \eta|\psi\rangle\), a Hadamard-test sampling model returns \(X_\ell\in\{-1,+1\}\) with \(\mathbb E X_\ell=a\); the estimator is \(\hat a=N^{-1}\sum_{\ell=1}^N X_\ell\), followed by the stored pressure normalization. Amplitude estimation changes the scalar-tolerance dependence, while entrywise readout of a full gather still carries the factor \(M\).

For simple Hadamard sampling,
\begin{equation}
  \mathbb E\hat a=a,\qquad
  \operatorname{Var}(\hat a)=\frac{1-a^2}{N}\le \frac1N .
  \label{eq:sisc-hadamard-variance}
\end{equation}
If a pressure datum is \(s_{j,h}\hat a\), the absolute standard deviation inherits the same scale factor \(s_{j,h}/\sqrt N\). Thus a full gather of \(M\) independent scalar estimates accumulates measurement uncertainty in the data norm even when the Hamiltonian simulation is exact. A sketched observable changes the target to \(Sd\), or to scalar functionals generated by rows of \(S\). For a local least-squares update this leads to
\begin{equation}
  \Phi_S(c)=\frac12\|S(F(c)-d^{\rm obs})\|_2^2,\qquad
  \nabla\Phi_S(c_0)=J^\top S^\top S r,\qquad
  H_Sv=J^\top S^\top S Jv .
  \label{eq:sisc-sketched-objective}
\end{equation}
The relevant diagnostic is therefore preservation of the gradient and Hessian-action directions used by the local method. We report finite-shot scalar errors and compressed Hessian-action sketches separately: pressure traces define the physical observable, while compressed Hessian-action information is the plausible readout target.

\section{Consistency, Measurement and Resource Accounting}
\label{sec:sisc-measurement-resource}

The following estimates analyze the finite-dimensional implementation
under explicit stability assumptions.

\subsection{Conditional operator consistency}

The next result separates the errors contributed by state evolution,
receiver evaluation, receiver differentiation, and Duhamel quadrature.
Proposition~\ref{prop:sisc-recovery-transfer} supplies the
auxiliary-recovery contributions included in \(\varepsilon_0\) and
\(\varepsilon_1\).

\begin{proposition}[Conditional pressure-Born error budget]
\label{prop:sisc-consistency}
Let \(\mathcal C\) and \(\mathcal C_h\) be continuous and discrete coefficient spaces with interpolation \(I_h^c:\mathcal C\to\mathcal C_h\). Let \(\mathcal C_r\subset\mathcal C\) be the admissible perturbation space, equipped with \(\|\cdot\|_{\mathcal C_r}\). Let \(X\) and \(X_h\) be continuous and discrete acoustic state spaces with projection \(\Pi_h^X:X\to X_h\). Let \(\mathcal D\) and \(\mathcal D_h\) be the corresponding pressure-data spaces with sampling map \(P_h^{\rm obs}:\mathcal D\to\mathcal D_h\). Assume \(c_0(x)\ge c_{\min}>0\), \(c_0+\theta\delta c\ge c_{\min}/2\) for \(0\le\theta\le1\), and that \(I_h^c\) preserves this positivity along the discrete coefficient path, with fixed material interfaces. All continuous and discrete background and tangent maps in this proposition use the same source convention.

Let \(\psi_0,\delta\psi\) and \(\psi_{0,h},\delta\psi_h\) denote the continuous and exact-in-time discrete background and tangent states. Assume, uniformly for \(0\le T\le T_0\),
\begin{align}
 \|\psi_{0,h}-\Pi_h^X\psi_0\|_{X_h}
 &\le \varepsilon_0, \nonumber\\
 \|\delta\psi_h-\Pi_h^X\delta\psi\|_{X_h}
 &\le \varepsilon_1\|\delta c\|_{\mathcal C_r}.
 \label{eq:sisc-state-defects}
\end{align}
Here \(\varepsilon_0\) and \(\varepsilon_1\) include the stated spatial, Hamiltonian-derivative, source, and auxiliary-recovery errors. Assume also
\begin{equation}
  \|\psi_0(T)\|_X\le C_0,
  \qquad
  \|\delta\psi(T)\|_X
  \le C_1\|\delta c\|_{\mathcal C_r},
  \qquad 0\le T\le T_0,
  \label{eq:sisc-state-stability}
\end{equation}
and
\begin{align}
 \|\Gamma_h(I_h^cc_0)\Pi_h^X-P_h^{\rm obs}\Gamma(c_0)\|
 &\le\varepsilon_\Gamma, \nonumber\\
 \|\D\Gamma_h[I_h^cc_0](I_h^c\delta c)\Pi_h^X
 -P_h^{\rm obs}\D\Gamma[c_0](\delta c)\|
 &\le\varepsilon_{\D\Gamma}\|\delta c\|_{\mathcal C_r},
 \label{eq:sisc-receiver-defects}
\end{align}
with uniformly bounded continuous and discrete receiver rows and
\begin{equation}
  \big\|\D\Gamma_h[I_h^cc_0](I_h^c\delta c)\big\|_{\mathcal L(X_h,\mathcal D_h)}
  \le C_{\D\Gamma,h}\|\delta c\|_{\mathcal C_r}.
  \label{eq:sisc-discrete-receiver-derivative-bound}
\end{equation}
Finally, let \(L_{h,Q}^{\rm ac}\) be the \(Q\)-node Duhamel approximation of the exact-in-time discrete operator \(L_h^{\rm ac}\), and assume
\begin{equation}
 \|(L_{h,Q}^{\rm ac}-L_h^{\rm ac})I_h^c\delta c\|_{\mathcal D_h}
 \le\varepsilon_Q\|\delta c\|_{\mathcal C_r}.
 \label{eq:sisc-quadrature-defect}
\end{equation}
Then
\begin{equation}
 \|L_{h,Q}^{\rm ac}I_h^c\delta c
   -P_h^{\rm obs}L_{\rm Sch}^{\rm ac}\delta c\|_{\mathcal D_h}
 \le C\big(\varepsilon_0+\varepsilon_1+\varepsilon_\Gamma
   +\varepsilon_{\D\Gamma}+\varepsilon_Q\big)
   \|\delta c\|_{\mathcal C_r}.
 \label{eq:sisc-consistency}
\end{equation}
\end{proposition}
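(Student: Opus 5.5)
The plan is a triangle-inequality decomposition: first separate off the Duhamel quadrature error, then split the exact-in-time discrete Born row into a propagated (tangent) part and a receiver-calibration part. Write \(\delta c_h=I_h^c\delta c\). By \eqref{eq:sisc-sch-born}, the exact-in-time discrete operator acts as
\[
L_h^{\rm ac}\delta c_h=\Gamma_h(I_h^cc_0)\delta\psi_h+\D\Gamma_h[I_h^cc_0](\delta c_h)\psi_{0,h},
\]
and the continuous operator as \(L_{\rm Sch}^{\rm ac}\delta c=\Gamma(c_0)\delta\psi+\D\Gamma[c_0](\delta c)\psi_0\). Here the recovery \(\calR_{p_a}\) is absorbed into the states, as permitted by Proposition~\ref{prop:sisc-recovery-transfer}, whose contributions are already counted in \(\varepsilon_0,\varepsilon_1\). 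The first step is
\[
\|L_{h,Q}^{\rm ac}\delta c_h-P_h^{\rm obs}L_{\rm Sch}^{\rm ac}\delta c\|
\le\|(L_{h,Q}^{\rm ac}-L_h^{\rm ac})\delta c_h\|+\|L_h^{\rm ac}\delta c_h-P_h^{\rm obs}L_{\rm Sch}^{\rm ac}\delta c\|.
\]
The first term is bounded by \(\varepsilon_Q\|\delta c\|_{\mathcal C_r}\) directly from \eqref{eq:sisc-quadrature-defect}.

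For the propagated term, insert \(\Gamma_h(I_h^cc_0)\Pi_h^X\delta\psi\):
\[
\Gamma_h\delta\psi_h-P_h^{\rm obs}\Gamma\delta\psi
=\Gamma_h(\delta\psi_h-\Pi_h^X\delta\psi)+(\Gamma_h\Pi_h^X-P_h^{\rm obs}\Gamma)\delta\psi .
\]
The uniform bound on discrete receiver rows together with \eqref{eq:sisc-state-defects} bounds the first piece by \(C\varepsilon_1\|\delta c\|_{\mathcal C_r}\). The receiver defect \eqref{eq:sisc-receiver-defects} and the stability bound \eqref{eq:sisc-state-stability} bound the second piece by \(\varepsilon_\Gamma C_1\|\delta c\|_{\mathcal C_r}\).

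For the calibration term, insert \(\D\Gamma_h[I_h^cc_0](\delta c_h)\Pi_h^X\psi_0\):
\[
\D\Gamma_h(\delta c_h)(\psi_{0,h}-\Pi_h^X\psi_0)+\bigl(\D\Gamma_h(\delta c_h)\Pi_h^X-P_h^{\rm obs}\D\Gamma(\delta c)\bigr)\psi_0 .
\]
By \eqref{eq:sisc-discrete-receiver-derivative-bound} and the \(\varepsilon_0\) defect, the first piece is at most \(C_{\D\Gamma,h}\varepsilon_0\|\delta c\|_{\mathcal C_r}\). By \eqref{eq:sisc-receiver-defects} and \(\|\psi_0\|\le C_0\), the second is at most \(\varepsilon_{\D\Gamma}C_0\|\delta c\|_{\mathcal C_r}\).

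Summing the five contributions gives \eqref{eq:sisc-consistency} with
\[
C=\max\{1,\;C_1,\;C_0,\;C_{\D\Gamma,h},\;\sup_h\|\Gamma_h\|\}.
\]
This constant is independent of \(h\), \(Q\) and \(T\le T_0\), because all the assumptions hold uniformly.

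The main obstacle is bookkeeping rather than analysis. Three points need checking. First, the operator norms in \eqref{eq:sisc-receiver-defects} must be read as \(\mathcal L(X,\mathcal D_h)\), so that they can be applied to \(\delta\psi(T)\) and \(\psi_0(T)\) at every sample time, and the stacked data norm must be formed consistently over source, receiver and time indices. Second, the positivity hypotheses on \(c_0\) and \(I_h^c\) are only needed to ensure that the defect assumptions are meaningful, that is, that all maps are well defined along the path; they enter no estimate directly. Third, a common source convention must be used so that \(\delta\psi_h\) and \(\delta\psi\) are comparable. I would state these readings explicitly at the start of the proof.
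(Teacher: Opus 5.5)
Your proposal is correct and is essentially the paper's proof: you make the same separation of the quadrature defect \((L_{h,Q}^{\rm ac}-L_h^{\rm ac})I_h^c\delta c\) and the same four-term decomposition obtained by inserting \(\Pi_h^X\delta\psi\) and \(\Pi_h^X\psi_0\), with the pieces bounded by \(\varepsilon_1\), \(C_1\varepsilon_\Gamma\), \(C_{\D\Gamma,h}\varepsilon_0\), and \(C_0\varepsilon_{\D\Gamma}\) times \(\|\delta c\|_{\mathcal C_r}\). The only caveat is the one you already flagged: once the per-time bounds are stacked over sources, receivers, and sample times, your explicit \(\max\) constant picks up a factor from those data weights, which is why the paper says \(C\) depends on the weights and on \(T_0\).
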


\begin{proof}
First separate the quadrature error from the exact-in-time discrete
operator:
\begin{equation}
  L_{h,Q}^{\rm ac}I_h^c\delta c-P_h^{\rm obs}L_{\rm Sch}^{\rm ac}\delta c
  =
  (L_{h,Q}^{\rm ac}-L_h^{\rm ac})I_h^c\delta c
  +(L_h^{\rm ac}I_h^c\delta c-P_h^{\rm obs}L_{\rm Sch}^{\rm ac}\delta c).
  \label{eq:sisc-error-split}
\end{equation}
Insert the projected continuous background and tangent states into the
second term. The resulting state and receiver decomposition is
\begin{align}
 L_h^{\rm ac}I_h^c\delta c-P_h^{\rm obs}L_{\rm Sch}^{\rm ac}\delta c
 &=\Gamma_h(I_h^cc_0)(\delta\psi_h-\Pi_h^X\delta\psi) \nonumber\\
 &\quad+(\Gamma_h(I_h^cc_0)\Pi_h^X
   -P_h^{\rm obs}\Gamma(c_0))\delta\psi \nonumber\\
 &\quad+\D\Gamma_h[I_h^cc_0](I_h^c\delta c)
   (\psi_{0,h}-\Pi_h^X\psi_0) \nonumber\\
 &\quad+(\D\Gamma_h[I_h^cc_0](I_h^c\delta c)\Pi_h^X
   -P_h^{\rm obs}\D\Gamma[c_0](\delta c))\psi_0 .
  \label{eq:sisc-born-error-components}
\end{align}
The four terms in \eqref{eq:sisc-born-error-components} are bounded,
respectively, by constants times
\(\varepsilon_1\), \(\varepsilon_\Gamma C_1\),
\(C_{\D\Gamma,h}\varepsilon_0\), and
\(\varepsilon_{\D\Gamma}C_0\), each multiplied by
\(\|\delta c\|_{\mathcal C_r}\). Adding the quadrature bound
\eqref{eq:sisc-quadrature-defect} proves \eqref{eq:sisc-consistency}.
The constant \(C\) depends on the stated receiver-row and state bounds,
the source/receiver/time weights, and \(T_0\), and is uniform in
\(\delta c\) and the five displayed tolerances.
\end{proof}

The estimate controls the pressure-observable Born map that underlies
the adjoint and Hessian actions. It gives consistency under
the stated spatial discretization, auxiliary recovery, receiver
projection, and quadrature assumptions. The convergence and conditioning
of the nonlinear FWI iteration are governed by the corresponding
inverse-problem geometry.

Equation~\eqref{eq:sisc-born-error-components} identifies the receiver-calibration derivative as a separate consistency term. After two operators are represented on the same finite-dimensional model and data spaces, their normal-action error satisfies
\[
  \|A^\top A-B^\top B\|
  \le(\|A\|+\|B\|)\|A-B\|.
\]
The symmetry identities tested below are exact algebraic identities for the adjoint of the same discrete Born map. Consistency with a projected continuous normal action additionally requires the compatible model and data inner products in \eqref{eq:sisc-weighted-adjoint}.

\subsection{Periodic finite-difference specialization}

\begin{proposition}[Periodic finite-difference specialization]
\label{prop:sisc-periodic}
For this proposition set
\(\mathcal C_r=\mathcal C_3=C_{\rm per}^3(\Omega)\), equipped with the
standard \(C^3\) norm, and fix \(s>d/2\). Measure continuous state errors
in \(H_{\rm per}^s(\Omega)^{d+1}\) and equip the grid states with a stable
discrete \(H^s\) norm. Then
\(H_{\rm per}^s\hookrightarrow C(\overline\Omega)\), so the stated
point-sampling or interpolation receivers are uniformly bounded.

For the periodic finite-difference implementation used in the calibrated
acoustic \(p_a\)-space diagnostics, suppose the spatial and auxiliary
derivatives are centered periodic differences and the damping term is
diagonal and fixed. Assume sufficient periodic smoothness of the
coefficient, source, receiver, auxiliary profile, and recovery map and,
uniformly for \(0\le t\le T_0\),
\[
  \|\psi_0(t)\|_{H_{\rm per}^{s+3}}\le C_0,
  \qquad
  \|\delta\psi(t)\|_{H_{\rm per}^{s+3}}
  \le C_1\|\delta c\|_{\mathcal C_3}.
\]
Assume further that every coefficient--state product to which a
centered first difference is applied in
\(A_{\rm sk}(c_0)\psi_0\) and
\(\D A_{\rm sk}[c_0](\delta c)\psi_0\) belongs to
\(H_{\rm per}^{s+3}\), uniformly on \([0,T_0]\), and that the products
in the derivative term have norms bounded by
\(C_2\|\delta c\|_{\mathcal C_3}\).

Finally, assume that the continuous and semidiscrete background and tangent
evolution families are uniformly stable on \([0,T_0]\) in these norms,
that the source embedding and state projection are second-order
consistent, and that the chosen auxiliary profile and recovery map
satisfy a truncation estimate \(Ce^{-\alpha L_p}\) and a
centered-difference estimate \(C\Delta p^2\) in the norms of
Proposition~\ref{prop:sisc-consistency}. Under these hypotheses, the
physical-grid contribution is second order, and the total estimate inherits
the assumed auxiliary rates:
\begin{equation}
  \|L_{{\rm per},h,Q}^{\rm ac}I_h^c\delta c-P_h^{\rm obs}L_{\rm Sch}^{\rm ac}\delta c\|
  \le C_{T_0}\big(h_x^2+h_z^2+\Delta p^2+e^{-\alpha L_p}
  +\varepsilon_Q\big)\|\delta c\|_{\mathcal C_3}.
  \label{eq:sisc-periodic}
\end{equation}
The discrete pressure row includes
\begin{align*}
  R_{j,h}(c_h)V_h
  &=c_h(x_j)\sum_{\rho=1}^{N_p}\chi_{{\rm rec},\rho}\pi_h(x_j,\rho),\\
  \D R_{j,h}[c_h](\delta c_h)V_h
  &=\delta c_h(x_j)\sum_{\rho=1}^{N_p}\chi_{{\rm rec},\rho}\pi_h(x_j,\rho).
\end{align*}
Thus the receiver-calibration derivative is part of the implemented Born
operator.
\end{proposition}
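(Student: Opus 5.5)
The proof will apply Proposition~\ref{prop:sisc-consistency} directly. The work is to check each of its hypotheses and to express the five tolerances \(\varepsilon_0,\varepsilon_1,\varepsilon_\Gamma,\varepsilon_{\D\Gamma},\varepsilon_Q\) in terms of \(h_x,h_z,\Delta p,L_p\) and the quadrature defect. Proposition~\ref{prop:sisc-recovery-transfer} will carry the auxiliary truncation and centered-difference assumptions on the recovery map into \(\varepsilon_0\) and \(\varepsilon_1\). Since \(C_0\) and \(C_1\) are assumed in \(H^{s+3}_{\rm per}\), the stability bounds \eqref{eq:sisc-state-stability} hold a fortiori in \(H^s_{\rm per}\). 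The quadrature term \(\varepsilon_Q\) passes through unchanged. The final constant \(C_{T_0}\) collects \(C\) from Proposition~\ref{prop:sisc-consistency} together with the stability constants.

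For the physical-grid part of \(\varepsilon_0\), I will use a Lax-type argument. I first bound the local truncation error
\[
\Pi_h^X A_{\rm sk}(c_0)\psi_0-A_{{\rm sk},h}(I_h^cc_0)\Pi_h^X\psi_0 .
\]
From the block form \eqref{eq:sisc-Ask-block}, each entry is either \(c\,D(\cdot)\) or \(D(c\,\cdot)\). A centered periodic difference applied to a function in \(H^{s+3}_{\rm per}\) differs from the exact derivative by \(O(h^2)\) in discrete \(H^s\). This Taylor argument loses three derivatives, which explains the assumed \(H^{s+3}\) regularity of the coefficient--state products. The error from interpolating \(c\) is \(O(h^2)\) for \(c\in C^3\). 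The damping term is diagonal and fixed, so it contributes no truncation error. Duhamel's formula and the assumed uniform stability of the semidiscrete family then give \(\varepsilon_0\lesssim h_x^2+h_z^2\) plus the auxiliary contributions and the second-order source-embedding error.

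The tangent error \(\varepsilon_1\) is the step I expect to be the main obstacle, because the tangent equation has the form \(\partial_t\delta\psi=A\delta\psi+\D A[c_0](\delta c)\psi_0\) and its forcing depends on the background. I will write the discrete-minus-projected tangent as a Duhamel integral whose integrand contains three pieces:
\begin{itemize}
\item the truncation error of \(A_{{\rm sk},h}\) acting on \(\delta\psi\), which is \(O(h^2)C_1\|\delta c\|_{\mathcal C_3}\);
\item the truncation error of \(\D A_{{\rm sk},h}\) acting on \(\psi_0\), which uses \eqref{eq:sisc-dAsk-block} and the \(C_2\) product bound and is \(O(h^2)C_2\|\delta c\|_{\mathcal C_3}\);
\item the propagated background error \(\D A_{{\rm sk},h}(I_h^c\delta c)(\psi_{0,h}-\Pi_h\psi_0)\).
\end{itemize}
The third piece needs the difference operator to be bounded in discrete \(H^s\), which would cost a factor \(h^{-1}\). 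I will avoid this by running the background estimate at one higher discrete Sobolev index, which is affordable because of the \(H^{s+3}\) slack. Alternatively, I will treat the background and tangent as one coupled block-triangular system and invoke the assumed stability of that pair. Uniform stability then gives \(\varepsilon_1\lesssim(h_x^2+h_z^2+\Delta p^2+e^{-\alpha L_p})\).

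Finally, the receiver terms. Since \(s>d/2\), point sampling and interpolation are bounded on \(H^s_{\rm per}\). Second-order interpolation of \(c_0\) and \(\delta c\) at \(x_j\), combined with a second-order receiver profile, gives \(\varepsilon_\Gamma,\varepsilon_{\D\Gamma}=O(h^2)\). The displayed formula for \(\D R_{j,h}\) is the discrete form of Proposition~\ref{prop:sisc-receiver-calibration}. It gives \eqref{eq:sisc-discrete-receiver-derivative-bound} with \(C_{\D\Gamma,h}\lesssim\|\delta c\|_{C^0}\|\chi_{\rm rec}\|\). It also shows that the calibration term is included, so \(\varepsilon_{\D\Gamma}\) compares like with like and carries no \(O(1)\) omission error. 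Summing all contributions yields \eqref{eq:sisc-periodic}.
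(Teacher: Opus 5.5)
Apart from one step, your plan is the paper's proof written out in more detail. You verify the hypotheses of Proposition~\ref{prop:sisc-consistency}, get the physical-grid parts of \(\varepsilon_0,\varepsilon_1\) from the centered-difference defect, variation of constants and uniform stability, take \(\Delta p^2+e^{-\alpha L_p}\) from the recovery assumptions, and carry \(\varepsilon_Q\) through.

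The exception is your third tangent piece, \(\D A_{{\rm sk},h}[I_h^cc_0](I_h^c\delta c)\,(\psi_{0,h}-\Pi_h^X\psi_0)\). You are right that it appears in the tangent error equation and that bounding it in discrete \(H^s\) costs a factor \(h^{-1}\). The paper's proof bounds only your first two pieces before appealing to stability, and passes over this one. Neither of your remedies closes it under the stated hypotheses.

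There is no \(H^{s+3}\) slack. The defect estimate \(\|D_hI_hv-I_hDv\|_{H_h^\sigma}\le Ch^2\|v\|_{H^{\sigma+3}}\) spends all three extra derivatives: one on the derivative and two on the factor \(h^2\). An \(O(h^2)\) background error in \(H_h^{s+1}\) therefore needs two things that are not assumed: the background coefficient--state products in \(H_{\rm per}^{s+4}\), and uniform stability of the semidiscrete background family in \(H_h^{s+1}\). With only \(H^{s+3}\) regularity the \(H_h^{s+1}\) defect is \(O(h)\). So even granting that stability, your argument gives only \(O(h)\) for the coupling term.

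The block-triangular alternative fails too. Under the natural reading of the hypothesis, the tangent evolution family is \(U_h\) itself, so stability of the coupled pair is not assumed. It is also false on \(H_h^s\times H_h^s\), because the off-diagonal block is the discrete Duhamel derivative, which loses one derivative. Take the undamped periodic case with constant \(c_0\) and constant \(\delta c\). Linearity of \(A_{{\rm sk},h}\) in \(c_h\) gives \(\D A_{{\rm sk},h}[c_0](\delta c)=(\delta c/c_0)A_{{\rm sk},h}(c_0)\), which commutes with \(U_h\). Hence \(\delta\psi_h(t)=(\delta c/c_0)\,t\,(A_{{\rm sk},h}(c_0)\otimes I)\psi_{0,h}(t)\), and the map \(\psi_{0,h}(0)\mapsto\delta\psi_h(t)\) has \(H_h^s\) operator norm of order \(t|\delta c|/h\).

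To close the step, state the missing ingredients as hypotheses instead of deriving them from the \(H^{s+3}\) bounds. You need \(H^{s+4}\) regularity of the background products and uniform \(H_h^{s+1}\) stability of the background family. Run the whole background estimate at index \(s+1\). The coupling term is then bounded by \(C\|\delta c\|_{\mathcal C_3}\|\psi_{0,h}-\Pi_h^X\psi_0\|_{H_h^{s+1}}\) and has the required order. The rest of your proposal then goes through.
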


\begin{proof}
The standard centered-difference estimate
\[
  \|D_hI_hv-I_hDv\|_{H_h^s}
  \le Ch^2\|v\|_{H^{s+3}}
\]
gives a second-order local defect for the background equation. In the
tangent equation, the stated product regularity controls both
\(A(c_0)\delta\psi\) and
\(\D A[c_0](\delta c)\psi_0\) at the same order, multiplied by
\(\|\delta c\|_{\mathcal C_3}\). Source/projection consistency, variation of
constants, and uniform stability then give the
\(h_x^2+h_z^2\) contribution; recovery gives
\(\Delta p^2+e^{-\alpha L_p}\). The displayed receiver derivative and
Proposition~\ref{prop:sisc-consistency} yield
\eqref{eq:sisc-periodic}, with \(\varepsilon_Q\) retaining quadrature
error. This estimate concerns the smooth periodic setting.

\end{proof}

The spatial \(O(h_x^2+h_z^2)\) rate follows from centered-difference
consistency. The numerical auxiliary row below reports the trend implied
by the assumed recovery rates.

\begin{corollary}[Computed pressure-Born error]
\label{cor:sisc-stack-error}
Let \(\widehat L_{h,Q}^{\rm ac}\) be a computed approximation of \(L_{h,Q}^{\rm ac}\). Assume that, for every \(\delta c\), the simulation, preparation and loading terms are operator errors, while \(\varepsilon_{\rm meas}^{(B)}\) is an additive measurement error holding with the stated success probability:
\begin{equation}
 \|(\widehat L_{h,Q}^{\rm ac}-L_{h,Q}^{\rm ac})I_h^c\delta c\|_{\mathcal D_h}
 \le(\varepsilon_{\rm sim}+\varepsilon_{\rm prep}
   +\varepsilon_{\rm load})\|\delta c\|_{\mathcal C_r}
   +\varepsilon_{\rm meas}^{(B)}.
 \label{eq:sisc-computed-operator-error}
\end{equation}
Then, under Proposition~\ref{prop:sisc-consistency},
\begin{align}
  \|\widehat L_{h,Q}^{\rm ac}I_h^c\delta c
  -P_h^{\rm obs}L_{\rm Sch}^{\rm ac}\delta c\|_{\mathcal D_h}
  &\le C\big(\varepsilon_0+\varepsilon_1+\varepsilon_\Gamma
  +\varepsilon_{\D\Gamma}+\varepsilon_Q \nonumber\\
  &\qquad\quad
  +\varepsilon_{\rm sim}+\varepsilon_{\rm prep}
  +\varepsilon_{\rm load}\big)
  \|\delta c\|_{\mathcal C_r}
  +\varepsilon_{\rm meas}^{(B)}.
  \label{eq:sisc-stack-error}
\end{align}
\end{corollary}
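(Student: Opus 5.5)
The plan is a single triangle-inequality split, inserting the exact quadrature operator \(L_{h,Q}^{\rm ac}\) between the computed operator and the projected continuous one. For a fixed admissible \(\delta c\) I would write
\[
  \widehat L_{h,Q}^{\rm ac}I_h^c\delta c-P_h^{\rm obs}L_{\rm Sch}^{\rm ac}\delta c
  =(\widehat L_{h,Q}^{\rm ac}-L_{h,Q}^{\rm ac})I_h^c\delta c
  +\big(L_{h,Q}^{\rm ac}I_h^c\delta c-P_h^{\rm obs}L_{\rm Sch}^{\rm ac}\delta c\big),
\]
take \(\|\cdot\|_{\mathcal D_h}\) of both sides, and bound each piece by a result already available.

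The second bracket is exactly the quantity controlled in Proposition~\ref{prop:sisc-consistency}. Its hypotheses are assumed in the corollary, so \eqref{eq:sisc-consistency} gives the bound
\[
C(\varepsilon_0+\varepsilon_1+\varepsilon_\Gamma+\varepsilon_{\D\Gamma}+\varepsilon_Q)\|\delta c\|_{\mathcal C_r}.
\]
The first bracket is bounded directly by assumption \eqref{eq:sisc-computed-operator-error}, which gives
\[
(\varepsilon_{\rm sim}+\varepsilon_{\rm prep}+\varepsilon_{\rm load})\|\delta c\|_{\mathcal C_r}+\varepsilon_{\rm meas}^{(B)}.
\]
This bound holds on the event, of the stated success probability, on which the measurement bound is valid.

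Adding the two bounds and replacing \(C\) by \(\max\{C,1\}\) puts all eight tolerances under a single constant multiplying \(\|\delta c\|_{\mathcal C_r}\). The measurement term \(\varepsilon_{\rm meas}^{(B)}\) stays outside that constant, with coefficient one. This is \eqref{eq:sisc-stack-error}.

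There is no real obstacle; the argument is bookkeeping, and the only points to watch are these:
\begin{itemize}
\item Both bounds must be measured in the same norm \(\|\cdot\|_{\mathcal D_h}\), and both must be applied to the same interpolated perturbation \(I_h^c\delta c\).
\item The measurement error is additive rather than proportional to \(\|\delta c\|_{\mathcal C_r}\), so it must not be absorbed into \(C\).
\item The combined estimate holds with the same success probability as \eqref{eq:sisc-computed-operator-error}. The consistency bound is deterministic, so no union bound is needed for a single \(\delta c\).
\end{itemize}
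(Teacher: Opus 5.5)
Your proposal is correct and matches the argument the paper intends; the paper states the corollary without a written proof. You insert \(L_{h,Q}^{\rm ac}I_h^c\delta c\), bound one piece by \eqref{eq:sisc-computed-operator-error} and the other by \eqref{eq:sisc-consistency}, and enlarge the constant to \(\max\{C,1\}\) while keeping \(\varepsilon_{\rm meas}^{(B)}\) outside it with coefficient one, which is the same intermediate-operator split as \eqref{eq:sisc-error-split} in the proof of Proposition~\ref{prop:sisc-consistency}, applied one level up.
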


Corollary~\ref{cor:sisc-stack-error} separates finite-difference
consistency from errors due to source and receiver preparation,
perturbation or residual loading, Hamiltonian simulation, and
measurement. Table~\ref{tab:sisc-resource} records the corresponding
query, gate, memory, and preparation costs. For entrywise full-gather
readout, \(\varepsilon_{\rm meas}^{(B)}\) includes accumulation over
the measured receiver--time scalars; for a sketched observable it is
the error after applying the sketch.

\subsection{Frozen receiver-state calibration}

The statevector diagnostics use normalized states, whereas the receiver row in \eqref{eq:sisc-q-data} is a physical pressure functional. Let \(|\eta_j(c_0)\rangle\) be the normalized receiver state proportional to the frozen row \(c_0(x_j)e_{\pi,x_j}\otimes\chi_{\rm rec}\), and let \(s_{j,h}(c_0)\) be its stored norm and pressure scale. Then
\begin{equation}
  \Gamma_{j,h}(c_0)V_h
  =s_{j,h}(c_0)\langle \eta_j(c_0),V_h\rangle .
  \label{eq:sisc-receiver-normalization}
\end{equation}
The implementation forms the Born row by freezing the receiver state at \(c_0\), tracking the physical scale classically, and adding the explicit derivative \(\D\Gamma_j[c_0](\delta c)\) in \eqref{eq:sisc-sch-born}. Writing \(\psi_0(T)=U_{c_0}(T)v_s\), the frozen-state and physical-pressure derivatives are
\begin{align}
  \delta d^{\rm frozen}_{j,h}
  &=
  s_{j,h}(c_0)\langle\eta_j(c_0),\calR_{p_a}\delta\psi_s(T)\rangle ,
  \label{eq:sisc-frozen-state-derivative}\\
  \delta d^{\rm press}_{j,h}
  &=
  s_{j,h}(c_0)\langle\eta_j(c_0),\calR_{p_a}\delta\psi_s(T)\rangle
  +\D\Gamma_{j,h}[c_0](\delta c)\calR_{p_a}\psi_0(T).
  \label{eq:sisc-pressure-alignment}
\end{align}
Equation \eqref{eq:sisc-frozen-state-derivative} is what a frozen normalized measurement state supplies. Equation \eqref{eq:sisc-pressure-alignment} is the physical pressure Born row. The gap between them is exactly the receiver-calibration derivative. If the measurement state were updated as a differentiable function of \(c\), additional normalization and state-derivative terms would appear; the implementation avoids this ambiguity by freezing the measurement state at the current background and adding the unnormalized pressure-row derivative explicitly.

This is the finite-dimensional meaning of the calibrated pressure observable. Raw amplitudes and uncalibrated \(\pi\)-rows correspond to different observables. Operationally, a hardware-level implementation would have to prepare \(|\eta_j(c_0)\rangle\), track or estimate \(s_{j,h}(c_0)\), and refresh these objects when the nonlinear background changes. A receiver supported on \(m_j\) physical grid points and \(N_p\) auxiliary nodes has \(O(m_jN_p)\) nonzero amplitudes before normalization. The statevector diagnostics evaluate the calibrated row directly on stored arrays; the resource table records receiver preparation and normalization separately from the Hamiltonian oracle.

\subsection{Oracle costs and selected-output readout}

The sparse finite-difference stencil gives row/value oracles for \(H_h(c_h)\) and \(\D H_h[c_h](\delta c_h)\), with block-encoding normalization scales \(\lambda_H\) and \(\lambda_{\D H}\). The resource model specifies these oracles and the inverse-problem inputs separately. Explicit block encodings of structured sparse matrices require concrete circuit constructions and access models \cite{CampsLinVanBeeumenYang2022,DanzStollenwerkCiani2026}. Index a \(p_a\)-space state by \((\alpha,i,\rho)\), where \(\alpha\in\{\pi,q_x,q_z\}\), \(i\) is a physical grid point and \(\rho\) is an auxiliary node. For the periodic stencil in \eqref{eq:sisc-Ask-block}, the acoustic rows contain
\begin{align}
  H_{(\pi,i,\rho),(q_x,i\pm e_x,\rho)}
  &=\pm \ii\,c_i/(2h_x),&
  H_{(\pi,i,\rho),(q_z,i\pm e_z,\rho)}
  &=\pm \ii\,c_i/(2h_z), \nonumber\\
  H_{(q_x,i,\rho),(\pi,i\pm e_x,\rho)}
  &=\pm \ii\,c_{i\pm e_x}/(2h_x),&
  H_{(q_z,i,\rho),(\pi,i\pm e_z,\rho)}
  &=\pm \ii\,c_{i\pm e_z}/(2h_z).
  \label{eq:sisc-oracle-stencil}
\end{align}
The \(p_a\)-coupling from a diagonal damping block connects \(\rho\) to
\(\rho\pm1\). A periodic location oracle therefore performs modular
addition in \(x\), \(z\), and \(p_a\), while the value oracle queries
\(c_i\) and attaches the finite-difference factor and phase \(\ii\).
Set \(c_{\max}=\|c_h\|_\infty\),
\(h_{\min}=\min\{h_x,h_z\}\), and
\(\sigma_{\max}=\|\Sigma_h\|_\infty\). With \(s_{\rm sk}=4\), diagonal
damping sparsity \(s_{\rm h}=1\), and central auxiliary derivative sparsity
\(s_p=2\),
\begin{equation}
  s_H\le s_{\rm sk}+s_{\rm h}s_p,\qquad
  \lambda_H=
  O\!\left(s_{\rm sk}\frac{c_{\max}}{h_{\min}}
  +s_{\rm h}s_p\frac{\sigma_{\max}}{\Delta p}\right).
  \label{eq:sisc-lambda-H}
\end{equation}
The derivative oracle uses the same location map, with every queried coefficient \(c_i\) in the acoustic part replaced by the input perturbation \(\delta c_i\). For fixed damping,
\begin{equation}
  \lambda_{\D H}=
  O\!\left(s_{\rm sk}\frac{\|\delta c\|_\infty}{h_{\min}}\right).
  \label{eq:sisc-lambda-dH}
\end{equation}
If damping, density or transducer terms are made model dependent, their derivative rows enter \(\lambda_{\D H}\) as additional terms.

Let \(\mathcal O_c\) and \(\mathcal O_v\) denote value oracles for \(c_h\) and a perturbation \(v_h\). Let \(U_s\), \(U_\eta\), and \(U_r\) prepare source, frozen receiver, and residual states, including their normalization factors. For source and receiver profiles represented on known supports of sizes \(k_s\) and \(k_\eta\), a direct construction enumerates the nonzero grid--auxiliary entries, applies controlled rotations, and uncomputes the support index. Its gate counts are linear in the support sizes up to index arithmetic. A discretized Gaussian source represented on the full grid has \(k_s\) equal to that full support. Preparing a generic dense residual or perturbation from a classical array similarly incurs work proportional to its input length unless further structure is available. A compressed observable additionally uses an oracle or preparation map \(U_S\) for its sketch rows. We cost value-oracle access, amplitude-state preparation, and input conversion separately; value-oracle access carries the stated sparse structure. Classical matrix-free tangent and adjoint actions likewise receive \(v\) and \(r\) as explicit inputs.

For a fixed background \(c_h=c_{0,h}\), quadrature nodes \(\tau_\ell\),
and weights \(w_\ell\), the discrete calibrated Born action has the form
\begin{align}
  L_{{\rm Sch},h}^{\rm ac}\delta c_h
  &=
  -\ii\,\Gamma_h(c_h)\calR_{p_a}
  \sum_{\ell=1}^{Q}
  w_\ell\,
  U_h(T-\tau_\ell)
  \D H_h[c_h](\delta c_h)
  U_h(\tau_\ell)v_s \nonumber\\
  &\quad
   +\D\Gamma_h[c_h](\delta c_h)
    \calR_{p_a}U_h(T)v_s ,
  \label{eq:sisc-born-lcu}
\end{align}
where the last term is the direct receiver-calibration term. A direct
\(Q\)-node LCU construction uses \(O(Q)\) controlled
Hamiltonian-evolution segments and \(O(Q)\) calls to a block encoding of
\(\D H_h[c_h](\delta c_h)\). Define the LCU coefficient normalization
\begin{equation}
  \alpha_Q=\lambda_{\D H}\sum_{\ell=1}^{Q}|w_\ell|.
  \label{eq:sisc-lcu-normalization}
\end{equation}
The normalization \(\alpha_Q\) applies to the Duhamel state contribution before receiver evaluation. The calibrated receiver \(\Gamma_h(c_h)\calR_{p_a}\) and the receiver-calibration contribution \(\D\Gamma_h[c_h](\delta c_h)\calR_{p_a}\) require their own preparation, normalization, and selected-observable costs. The finite-dimensional workflow evaluates the two contributions separately and restores their classical scale factors before addition. A circuit-level version can follow the same separate-estimate route: prepare the sparse frozen receiver for the propagated term, query \(\mathcal O_v\) on the receiver support for the calibration term, estimate the corresponding background-pressure functional, and add the two scaled estimates. If \(\alpha_\Gamma\) bounds the calibration-row normalization, the total selected-observable error is the sum of the two estimation errors, and amplification depends jointly on \(\alpha_Q\) and \(\alpha_\Gamma\). A coherent LCU combination requires an additional selector and incurs the combined success-normalization cost.
With qubitization or comparable sparse Hamiltonian simulation, one application of \(U_h(t)\) to accuracy \(\varepsilon_{\rm sim}\) costs \(\widetilde O(\lambda_Ht+\log(1/\varepsilon_{\rm sim}))\) stencil-oracle queries \cite{Berry2015,LowChuang2019}. Before success amplification, state preparation, loading, and measurement, the Hamiltonian-simulation part of the direct construction has query scale
\begin{equation}
  \widetilde O\!\left(Q\,[\lambda_H T
  +\log(Q/\varepsilon_{\rm sim})]\right),
  \label{eq:sisc-born-query-cost}
\end{equation}
together with \(O(Q)\) derivative-block calls. The total complexity must additionally include the implementation and amplification dependence associated with \(\alpha_Q\), source and receiver preparation, perturbation and residual access, and the selected-output observable. Choosing \(Q=\Theta(\lambda_HT)\) for a fixed-order rule makes the displayed Hamiltonian-segment query bound \(\widetilde O((\lambda_HT)^2)\) for the direct construction analyzed here. This estimate does not establish a lower bound for the inverse problem or for all possible Hamiltonian-simulation and quadrature constructions. The LCU success factor, input-loading cost, and measurement cost are accounted for separately. Resolving Hamiltonian-induced oscillations is consistent with analyses of highly oscillatory quantum dynamics \cite{AnFangLin2022}.

Equation~\eqref{eq:sisc-born-query-cost} accounts for the
Hamiltonian-simulation component; Table~\ref{tab:sisc-resource} lists the
additional access, preparation, amplification, and result-extraction
costs.

\begin{proposition}[Entrywise full-gather readout is output-size limited]
\label{prop:sisc-no-full-gather}
If an \(M\)-sample pressure gather is reconstructed by estimating each receiver-time scalar separately to additive tolerance \(\epsilon_{\rm meas}\) and constant success probability, then the standard entrywise procedures use \(O(M/\epsilon_{\rm meas})\) amplitude-estimation calls \cite{Brassard2002} or \(O(M/\epsilon_{\rm meas}^2)\) simple Hadamard-test samples, up to logarithmic amplification factors for simultaneous success over all \(M\) entries. Independently of these estimator upper bounds, explicitly returning \(M\) classical scalars has an \(\Omega(M)\) output-size cost. Thus this entrywise-output model is at least linear in \(M\).
\end{proposition}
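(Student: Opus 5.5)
The plan is to treat the statement as two separate claims: an upper bound on the cost of the entrywise estimators, and a lower bound that comes only from having to write down the output.

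\textbf{Upper bounds.} Fix one receiver--time index \(m\in\{1,\dots,M\}\). By the Hadamard-test model and \eqref{eq:sisc-hadamard-variance}, the estimator \(\hat a_m\) is unbiased with variance at most \(1/N\).

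For simple sampling, Chebyshev's inequality (or Hoeffding's, since \(X_\ell\in\{-1,+1\}\)) gives additive error \(\epsilon_{\rm meas}\) with constant success probability once \(N=O(\epsilon_{\rm meas}^{-2})\). Summing over the \(M\) entries gives \(O(M/\epsilon_{\rm meas}^2)\) samples.

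For amplitude estimation, I will cite \cite{Brassard2002}. With \(O(1/\epsilon_{\rm meas})\) controlled applications of the state-preparation and reflection unitaries, it returns each scalar to additive error \(\epsilon_{\rm meas}\) with probability at least \(8/\pi^2\). Summing over entries gives \(O(M/\epsilon_{\rm meas})\) calls.

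The stored scale \(s_{j,h}\) multiplies each estimate. It therefore only rescales the tolerance per entry, which I will absorb into \(\epsilon_{\rm meas}\) as the statement implicitly does.

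\textbf{Simultaneous success.} To have all \(M\) entries succeed at once with constant probability, I will use median-of-means boosting. Repeating each estimator \(O(\log M)\) times and taking the median drives the per-entry failure probability down to \(O(1/M)\). A union bound over the \(M\) entries then gives constant overall success. This is exactly the logarithmic amplification factor allowed in the statement.

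\textbf{Lower bound.} I will argue that any procedure which explicitly returns \(M\) classical scalars must perform at least one write operation per output scalar. This gives \(\Omega(M)\) work, independent of how the scalars were estimated.

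Combining the two parts, the estimator costs are \(\Omega(M)\) from the output requirement, and \(O(M/\epsilon_{\rm meas})\) or \(O(M/\epsilon_{\rm meas}^2)\) from the constructions above. Since \(\epsilon_{\rm meas}\le 1\), both upper bounds are themselves at least linear in \(M\), so the entrywise-output model is at least linear in \(M\).

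\textbf{Main obstacle.} The difficult step is only a matter of stating the model precisely. The \(\Omega(M)\) claim is a statement about output size in the entrywise model, not a query lower bound for arbitrary quantum algorithms. I will therefore phrase it as counting classical output operations, and I will not try to prove optimality of the \(1/\epsilon\) scaling.
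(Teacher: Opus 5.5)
Your proposal is correct and follows the paper's argument. The paper likewise multiplies the per-scalar estimation cost by \(M\) to get the \(O(M/\epsilon_{\rm meas})\) and \(O(M/\epsilon_{\rm meas}^2)\) bounds, and it takes the \(\Omega(M)\) bound purely from the size of the explicit output; you additionally spell out the concentration step, the \(8/\pi^2\) success probability of amplitude estimation, and the median-plus-union-bound origin of the logarithmic factor. One small correction to your closing paragraph: upper bounds that happen to be at least linear in \(M\) when \(\epsilon_{\rm meas}\le1\) imply no lower bound, so the ``at least linear'' conclusion should rest solely on your \(\Omega(M)\) output-size argument, which already suffices.
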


\begin{proof}
Estimating all \(M\) entries separately multiplies the scalar-estimation
cost by \(M\), giving the two stated upper bounds. Any explicit classical
representation of the full gather contains \(M\) scalars and therefore has
output-size cost \(\Omega(M)\).
\end{proof}

The proposition is a readout statement for algorithms whose classical output is a full receiver-time gather with entrywise accuracy. Other FWI formulations can use smaller measurement targets, such as scalar misfits, randomized receiver sketches, gradient inner products or Hessian-action probes. The experiments therefore report compressed Hessian-action correlations and finite-shot scalar uncertainty separately from full pressure-trace errors.

\begin{table}[ht]
\centering
\caption{Component-level accounting for the calibrated pressure-observable interface. The middle columns specify access, preparation, normalization, and readout requirements; the final column records the finite-dimensional evidence implemented here.}
\label{tab:sisc-resource}
\scriptsize
\setlength{\tabcolsep}{2.2pt}
\begin{tabular}{@{}L{0.18\textwidth}L{0.27\textwidth}L{0.26\textwidth}L{0.20\textwidth}@{}}
\toprule
Primitive & Input or oracle model & Cost or bottleneck & Evidence in this paper\\
\midrule
Hamiltonian \(H_h(c_h)\) & stencil location oracle and \(\mathcal O_c\) & normalization \(\lambda_H\); boundary-aware construction & matrices, statevectors, and compiled product formula\\
Derivative \(\D H_h(\delta c_h)\) & perturbation value oracle \(\mathcal O_v\), or a separately costed state reduction & normalization \(\lambda_{\D H}\); input conversion & Born actions and compiled LCU zero block\\
Sources/receivers & sparse-support rotations \(U_s,U_\eta\) and stored scales & \(O(k_s+k_\eta)\) support-dependent preparation; background refresh & exact arrays and structured gate preparation\\
Born/adj./Hessian & \(Q\)-node LCU, reverse receiver preparation, and separate calibration-row estimate & Hamiltonian cost \eqref{eq:sisc-born-query-cost}; \(O(Q)\) derivative calls; \(\alpha_Q,\alpha_\Gamma\) amplification & dense/statevector agreement and one-node compiled Born circuit\\
Residual/readout & \(U_r\), selected pressure functionals, and optional \(U_S\) & loading plus \(M,\epsilon_{\rm meas}\)-dependent readout & Bernoulli-sampled ideal-circuit outputs and sketches\\
\bottomrule
\end{tabular}
\end{table}
\FloatBarrier

\section{Compiled Pressure-Born Circuits and Hybrid Local Inversion}
\label{sec:sisc-gate-level}

\subsection{Compiled pressure--Born circuit}

The small circuit prototype instantiates the calibrated pressure Born
row through structured component circuits.
It uses four periodic one-dimensional spatial points, four auxiliary
\(p_a\) points, and two acoustic components, giving 32 amplitudes on
five system qubits. Three selector qubits and one interferometric ancilla
bring the total to nine qubits. Structured source and receiver states are prepared
with \(X\) and \(R_y\) rotations. For this fixed instance the Pauli
coefficients are precomputed classically,
\begin{equation}
 H_h(c_0)=\sum_{\mu=1}^{K_H}a_\mu P_\mu,\qquad
 \D H_h[c_0](v)=\sum_{\nu=1}^{K_D}b_\nu Q_\nu .
 \label{eq:sisc-circuit-pauli}
\end{equation}
Section~\ref{sec:sisc-measurement-resource} analyzes scalable
sparse-oracle access, whereas the fixed 32-amplitude compiled instance
precomputes its Pauli coefficients and tests the gate-level operator and
readout construction. The resource model accounts for the construction of
the sparse oracles. The single-node midpoint rule used here is the \(Q=1\)
specialization of the discrete Duhamel action in~\eqref{eq:sisc-born-lcu}.

Each Pauli exponential is compiled into basis changes, a CNOT parity
chain, and an \(R_Z\) rotation. With \(r\) product-formula repetitions,
the second-order propagator is
\begin{equation}
 U_{2,r}(t)=
 \left[
 \prod_{\mu=1}^{K_H}e^{-\ii a_\mu P_\mu t/(2r)}
 \prod_{\mu=K_H}^{1}e^{-\ii a_\mu P_\mu t/(2r)}
 \right]^r .
 \label{eq:sisc-product-formula}
\end{equation}
For \(\alpha_v=\sum_\nu|b_\nu|\), a selector state with amplitudes
\(\sqrt{|b_\nu|/\alpha_v}\), phase-aware controlled Pauli operations,
and uncomputation produce a
\textsc{Prepare--Select--Unprepare} unitary \(U_{\D H(v)}\) satisfying
\begin{equation}
 (\langle0|\otimes I)U_{\D H(v)}(|0\rangle\otimes I)
 =\D H_h[c_0](v)/\alpha_v .
 \label{eq:sisc-dh-lcu-block}
\end{equation}
The compiled Born circuit uses the single-node midpoint approximation
\begin{equation}
 -\ii T U_{2,r}(T/2)\D H_h[c_0](v)
 U_{2,r}(T/2)|s\rangle .
 \label{eq:sisc-compiled-midpoint}
\end{equation}
The propagated overlap and the explicit receiver-calibration overlap
are estimated in separate interferometers and combined after restoring
their known scales.

This separate-estimate route gives a circuit-level specialization of
\eqref{eq:sisc-hadamard-variance}. Write
\(B_v=s_{\rm p}a_{\rm p}+s_{\rm c}a_{\rm c}\), where \(a_{\rm p}\) and
\(a_{\rm c}\) are the propagated and receiver-calibration interferometric
expectation values, respectively, and \(s_{\rm p},s_{\rm c}\) contain
the corresponding LCU, quadrature, and pressure scales. For independent shot batches
of sizes \(N_{\rm prop}\) and \(N_{\rm cal}\),
\begin{equation}
 \mathbb E\widehat B_v=B_v,\qquad
 \operatorname{Var}(\widehat B_v)=
  \frac{s_{\rm p}^2(1-a_{\rm p}^2)}{N_{\rm prop}}+
  \frac{s_{\rm c}^2(1-a_{\rm c}^2)}{N_{\rm cal}} .
 \label{eq:sisc-circuit-born-variance}
\end{equation}
Equation~\eqref{eq:sisc-circuit-born-variance} explains both the
\(N^{-1/2}\) sampling scale and the larger relative uncertainty when
the two physical contributions nearly cancel.
Together with Corollary~\ref{cor:sisc-stack-error}, it gives the
circuit specialization
\begin{equation}
 |\widehat B_v-B_v^{\rm phys}|
 \leq |\widehat B_v-B_v^{\rm circ}|
 +|B_v^{\rm circ}-B_v^{\rm dense,mid}|
 +|B_v^{\rm dense,mid}-B_v^{\rm phys}|.
 \label{eq:sisc-circuit-error-split}
\end{equation}
This separates finite-shot, compiled, midpoint-quadrature, and discretization
errors.

\subsection{Finite-shot VQLS local update}

The selected outputs drive a small hybrid inversion. At outer
iteration \(k\), forward circuits provide selected pressure data and
Born circuits provide the entries of \(\widetilde J_k\). Classical
assembly forms the residual and the regularized local system
\begin{equation}
 A_k\delta c_k=b_k,\qquad
 A_k=\widetilde J_k^\top\widetilde J_k+\lambda_k I,\qquad
 b_k=-\widetilde J_k^\top\widetilde r_k .
 \label{eq:sisc-hybrid-normal-system}
\end{equation}
Here tildes denote finite-shot estimates. At every iteration and for every
seed, the same rule sets \(\lambda_k=10^{-4}\max\{\tau_k,10^{-14}\}\),
where \(\tau_k=\operatorname{tr}(\widetilde J_k^\top\widetilde J_k)/4\).
Let \(\bar b_k=b_k/\|b_k\|_2\),
\(\widetilde A_k=A_k/\|A_k\|_2\), and
\[
 \beta_k=
 \big\|\widetilde A_k
 (I-|\bar b_k\rangle\langle\bar b_k|)
 \widetilde A_k\big\|_2 .
\]
The two-qubit VQLS ansatz represents a normalized real
four-component direction \(|x(\theta)\rangle\)
\cite{BravoPrieto2023VQLS}. Its finite-shot objective is
\begin{equation}
 C_k(\theta)=
 \frac{\langle x(\theta)|\widetilde A_k
 (I-|\bar b_k\rangle\langle\bar b_k|)
 \widetilde A_k|x(\theta)\rangle}{\beta_k}.
 \label{eq:sisc-vqls-cost}
\end{equation}
The three-parameter ansatz applies \(R_y(\theta_0)\) to the first qubit
and two complementary controlled-\(R_y\) rotations with angles
\(\theta_1,\theta_2\). Classical SPSA \cite{Spall1992SPSA} minimizes the ten-term Pauli
expansion of \eqref{eq:sisc-vqls-cost}; its identity contribution is
known analytically, and the nine nonidentity terms are measured. The
optimizer runs for 1,000 iterations with
\(a_{\rm SPSA}=0.5\), \(c_{\rm SPSA}=0.12\), and 5,000 shots per term.
Two cost evaluations per iteration give
\(2\times1{,}000\times9=18{,}000\) sampled cost circuits; validation, scale
recovery, and signed-amplitude readout are counted separately.

Scale recovery converts the normalized direction into a model step.
With \(\rho_k\) denoting the recovered scale,
\begin{equation}
 \rho_k=
 \|b_k\|_2
 \frac{\operatorname{Re}\langle\bar b_k|A_k|x(\theta)\rangle}
 {\langle x(\theta)|A_k^2|x(\theta)\rangle},
 \qquad
 \widehat{\delta c}_k=\rho_k\,\widehat x(\theta).
 \label{eq:sisc-vqls-scale}
\end{equation}
If \(b_k=0\), the local step is set to zero. Otherwise, each reported
regularized \(A_k\) is positive definite, so the displayed
normalizations and scale denominator are positive.
Finite-shot overlap circuits estimate the scale, and four reference
interferometers recover the signed real components of
\(\widehat x(\theta)\). Validation, scale, and signed-amplitude overlaps
use \(5\times10^4\) shots each.

We draw independent Bernoulli samples from the explicit ideal-circuit
probabilities. These samples have the same distribution as ideal
finite-shot execution of the compiled measurement circuits. Together
with the sampled circuit outputs, classical Jacobian-table assembly,
normal-system construction, Pauli decomposition, SPSA, line search, and
model refresh complete the hybrid local-inversion loop.

Shot counts retain all Bernoulli records from the Hadamard or interference
circuits. The selector-zero probability enters the known LCU scale restored
by the estimator, so \(10^5\) shots denotes the complete set of \(10^5\) raw
circuit samples.

One update comprises a selected forward evaluation, \(P\) Born columns,
classical system assembly, VQLS cost evaluation, scale recovery,
signed-amplitude readout, and repeated forward evaluations for the line
search. The Supplement gives the complete ledger. The parameters
\(P,Q\), product-formula order, conditioning, solver tolerance, and
overlap precision govern the corresponding calls and shot allocation.

\section{Numerical Experiments and Diagnostics}
\label{sec:numerical-evidence}

The experiments track the calibrated pressure observable across
discrete-action and refinement checks, the compiled circuit and hybrid
inversion, and larger statevector and compressed-readout diagnostics.
The Supplement gives the protocols, controls, and input transformations
for the Marmousi-derived array \cite{Brougois1990,Alfarhan2025}.

\subsection{Calibrated acoustic operator checks}

For the implemented finite-dimensional map, the reported Born
finite-difference discrepancy is
\[
 \varepsilon_{\rm BFD}(\epsilon;v)=
 \frac{\|F(c_0+\epsilon v)-F(c_0)-\epsilon Jv\|_2}
 {\epsilon\|Jv\|_2}.
\]
Across three finite-dimensional tests, the Born finite-difference
discrepancy ranges from \(9.33\times10^{-5}\) to
\(5.56\times10^{-4}\), while the adjoint and Gauss--Newton symmetry
defects are at most \(2.70\times10^{-15}\)
(Table~\ref{tab:sisc-main-operator-checks}).
Figure~\ref{fig:sisc-ablation} isolates the receiver-calibration
derivative: omitting it leaves \(O(1)\) finite-difference errors, while
the full \(c\pi\) derivative converges. The table reports relative
Born finite-difference, adjoint-duality, and normal-symmetry defects.

\begin{table}[ht]
\centering
\caption{Main acoustic \(p_a\)-space operator checks at three
finite-dimensional scales. Spatial refinement is reported separately in
Table~\ref{tab:sisc-separated-convergence}. A zero denotes a defect that
vanished in the reported floating-point calculation.}
\label{tab:sisc-main-operator-checks}
\scriptsize
\setlength{\tabcolsep}{3.2pt}
\begin{tabular}{@{}lcccc@{}}
\toprule
Grid/\(N_p\) & Action & Data/model & Born FD (\(\epsilon=10^{-2}\)) & Adjoint/Hessian check\\
\midrule
\(8^2/5\) & matrix & \(15/64\) & \(5.56\times10^{-4}\) & \(0,\;4.07\times10^{-16}\)\\
\(16^2/3\) & matrix & \(24/256\) & \(2.00\times10^{-4}\) & \(2.70\times10^{-15},\;3.10\times10^{-16}\)\\
\(32^2/3\) & matrix-free & \(15/1024\) & \(9.33\times10^{-5}\) & \(1.95\times10^{-16},\;4.90\times10^{-16}\)\\
\bottomrule
\end{tabular}
\end{table}

Independently coded tangent and reverse-adjoint RK4 recurrences agree
with autodiff JVP/VJP actions to a maximum relative defect of
\(4.30\times10^{-15}\) across the Born, adjoint, Gauss--Newton,
weighted-adjoint, and symmetry checks
(Table~\ref{tab:sisc-matched-pressure-baseline}). Centered finite
differences are at most \(1.42\times10^{-9}\). Explicit formation of
the \(8^2\) Jacobian also verifies its transpose and normal actions to
\(2.30\times10^{-16}\) and \(2.19\times10^{-16}\), respectively.

\begin{table}[ht]
\centering
\caption{Independent verification of the same discrete calibrated
pressure map. Manual tangent/adjoint RK4 and autodiff evaluate the
Born, adjoint, and Gauss--Newton actions. Entries are maximum relative
defects over three seeded direction/residual pairs.}
\label{tab:sisc-matched-pressure-baseline}
\scriptsize
\setlength{\tabcolsep}{2.5pt}
\begin{tabular}{@{}rrrrrr@{}}
\toprule
Grid & \(Jv\) & \(J^\top r\) & \(J^\top Jv\) & Weighted adjoint & Normal symmetry\\
\midrule
\(8^2\)  & \(1.01\times10^{-16}\) & \(8.19\times10^{-16}\) & \(4.37\times10^{-16}\) & \(9.01\times10^{-16}\) & \(1.57\times10^{-16}\)\\
\(16^2\) & \(6.57\times10^{-17}\) & \(9.53\times10^{-16}\) & \(4.98\times10^{-16}\) & \(7.02\times10^{-16}\) & \(1.96\times10^{-16}\)\\
\(24^2\) & \(1.03\times10^{-16}\) & \(7.73\times10^{-16}\) & \(8.57\times10^{-16}\) & \(4.30\times10^{-15}\) & \(2.81\times10^{-16}\)\\
\(32^2\) & \(2.91\times10^{-17}\) & \(6.11\times10^{-16}\) & \(1.51\times10^{-15}\) & \(8.09\times10^{-16}\) & \(2.82\times10^{-16}\)\\
\bottomrule
\end{tabular}
\end{table}

The separated convergence diagnostics in
Table~\ref{tab:sisc-separated-convergence} summarize implemented
components of Proposition~\ref{prop:sisc-consistency}. For the spatial
rows, a smooth periodic model fixes the physical inputs while the grid
is refined against a \(192^2\) reference. Across \(16^2\)--\(64^2\),
the observed orders are 1.89--2.17 for pressure data and 1.93--2.17 for
the complete Born action. The finite reference can make the finest-pair
estimates slightly exceed two; the coarsest-to-finest fitted orders over
\(16^2\)--\(64^2\), 2.015 and 2.036,
remain consistent with second-order behavior. Omitting the receiver
derivative instead leaves relative Born errors between 1.127 and 1.150.
The auxiliary and quadrature rows are one-component refined-reference
diagnostics. Supplement Section S7 separately compares quadrature rules
at a fixed Hamiltonian.

\begin{table}[ht]
\centering
\caption{Separated discretization diagnostics for the calibrated acoustic
\(p_a\)-space discretization. Spatial rows use a smooth periodic refinement
with fixed physical inputs; the auxiliary and quadrature rows vary one
component against a refined reference.}
\label{tab:sisc-separated-convergence}
\scriptsize
\setlength{\tabcolsep}{3.2pt}
\begin{tabular}{@{}lccc@{}}
\toprule
Diagnostic & Coarse setting/error & Refined setting/error & Observed behavior\\
\midrule
Periodic pressure grid & \(n=16\), \(2.27\times10^{-4}\) & \(n=64\), \(1.39\times10^{-5}\) & order 1.89--2.17\\
Periodic Born grid & \(n=16\), \(3.50\times10^{-2}\) & \(n=64\), \(2.08\times10^{-3}\) & order 1.93--2.17\\
Auxiliary grid & \(N_p=3\), \(1.19\times10^{-2}\) & \(N_p=11\), \(3.60\times10^{-3}\) & \(3.3\times\) reduction; empirical trend\\
Born quadrature & \(Q=1\), \(3.20\times10^{-2}\) & \(Q=8\), \(2.55\times10^{-4}\) & \(126\times\) reduction; empirical trend\\
\bottomrule
\end{tabular}
\end{table}

\begin{figure}[ht]
\centering
\includegraphics[width=0.96\textwidth]{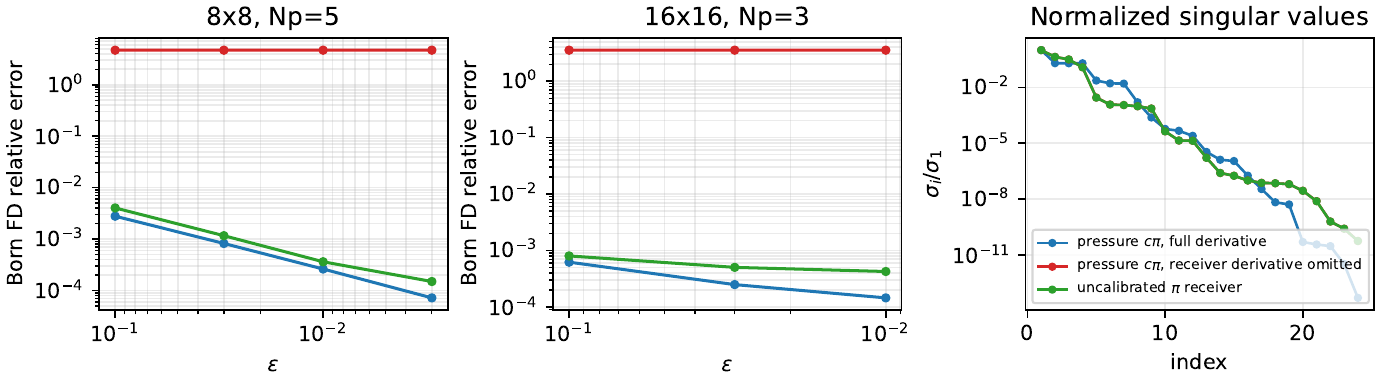}
\caption{Receiver-observable ablation. The physical \(c\pi\) observable with the receiver-calibration derivative passes the Born finite-difference check; omitting the derivative leaves \(O(1)\) errors. The right panel compares normalized singular spectra for the full calibrated pressure row, the same pressure row with its receiver derivative omitted, and the uncalibrated \(\pi\) receiver.}
\label{fig:sisc-ablation}
\end{figure}

Removing the receiver-calibration block yields a regularized
Gauss--Newton step with relative error \(1.296\) and correlation \(0.068\)
against the calibrated step. Its full-data quadratic reduction is
\(-2.101\), and the calibrated-objective line search selects \(\alpha=0\);
the complete step instead achieves a relative reduction of \(0.979\) in
the nonlinear pressure objective. Supplement Section~S7 gives
the complete diagnostics.

\subsection{Compiled circuit and finite-shot hybrid inversion}

The compiled circuit reproduces the dense midpoint Born value to
\(1.89\times10^{-6}\) and its centered finite difference to
\(2.80\times10^{-4}\). The omitted-term ratio
\(\lvert B_{\rm full}-B_{\rm prop}\rvert/\lvert B_{\rm prop}\rvert\) is
\(2.997\). Under the fixed injected first-order source convention, the
prototype uses \(K_H=14\) Hamiltonian and \(K_D=8\) derivative Pauli
terms. The Pauli reconstructions and LCU block agree to at most
\(5.58\times10^{-15}\), and four product-formula repetitions give a
forward discrepancy of \(3.63\times10^{-7}\).
The two contributions are \(-1.430537\times10^{-4}\) and
\(4.287825\times10^{-4}\), giving \(2.857288\times10^{-4}\).
Table~\ref{tab:sisc-gate-hybrid} collects the circuit and inversion
results.

\begin{table}[ht]
\centering
\caption{Compiled pressure-Born and finite-shot hybrid-inversion
summary. Circuit depths and CX counts are basis-transpiled under
unrestricted connectivity. The omitted-term ratio uses
\(\lvert B_{\rm full}-B_{\rm prop}\rvert/\lvert B_{\rm prop}\rvert\),
whereas the circuit--FD discrepancy uses the complete circuit finite
difference as its reference.}
\label{tab:sisc-gate-hybrid}
\scriptsize
\setlength{\tabcolsep}{2.0pt}
\begin{tabular}{@{}L{0.24\textwidth}L{0.20\textwidth}L{0.29\textwidth}L{0.18\textwidth}@{}}
\toprule
Circuit quantity & Value & Hybrid-inversion quantity & Value\\
\midrule
System/selector/readout qubits & \(5/3/1\) & Pressure observations/model parameters & \(32/4\)\\
Forward depth/CX & \(7{,}258/3{,}295\) & Forward/Born shots per overlap & \(10^5\)\\
Born-LCU depth/CX & \(20{,}334/9{,}373\) & VQLS shots per Pauli cost & \(5\times10^3\)\\
Forward/Born relative error & \(3.63\!\times\!10^{-7}/1.89\!\times\!10^{-6}\) & Initial relative model error & \(1.0680\times10^{-1}\)\\
Complete circuit--FD relative discrepancy & \(2.80\times10^{-4}\) & Final error, mean/median & \((2.8069/2.5059)\times10^{-3}\)\\
Omitted-term ratio & \(2.997\) & Runs improving the model & \(10/10\)\\
LCU selector-zero probability & \(0.09050\) & Calls/raw samples per update & \(19{,}572/2.006\times10^8\)\\
\bottomrule
\end{tabular}
\end{table}

The fixed four-parameter test uses 32 synthetic pressure
observations generated by the same forward model, ten predeclared seeds
\(46200\)--\(46209\), at most three outer iterations, and noiseless data.
Each selected forward or Born overlap uses
\(10^5\) shots; the VQLS schedule is specified in
Section~\ref{sec:sisc-gate-level}. Exact statevectors provide the
circuit probabilities and post-run diagnostics, while the inversion uses
the sampled cost, scale, and solution-direction estimators.

For the same frozen initial system \(A_0,b_0\), the direct solve and
zero-start conjugate gradient~\cite{HestenesStiefel1952} provide
classical references. CG reaches a machine-precision residual in four
iterations, consistent with exact-arithmetic termination in at most
four steps for this \(4\times4\) positive-definite system. The direct
solve also reaches a machine-precision residual. The archived
exact-state VQLS reaches \(7.90\times10^{-9}\), while ten finite-shot
VQLS replicates have mean residual 0.1119 at
\(1.362\times10^8\) raw samples per replicate. Because all four solvers
use the same frozen system, the comparison separates algebraic,
variational, and sampling errors while recording measurement work rather
than runtime scaling. The Supplement gives the complete per-update
measurement ledger.

All ten full finite-shot runs reduce the initial \(1.068\times10^{-1}\)
model error to \(4.54\times10^{-4}\)--\(7.26\times10^{-3}\) (sample
standard deviation \(1.95\times10^{-3}\)). With ideal forward/Born
probabilities but the same finite-shot VQLS, the matched control reaches
mean/median final errors \((2.8678/2.1992)\times10^{-3}\), with positive
late-step selections throughout. This comparison helps separate forward/Born
sampling from the retained VQLS sampling and variational errors. With ten
runs per group, the two summary statistics give opposite orderings: the
ideal-forward/Born control has a slightly higher mean but a lower median.
Under the full protocol, a candidate is
accepted when three fresh forward estimates average below the incumbent;
the rule assigns \(\alpha=0\) to two late proposals. The Supplement reports
complete correlations and trajectories; Figure~\ref{fig:sisc-finite-shot-vqls}
shows the distributions.

\begin{figure}[ht]
\centering
\includegraphics[width=0.96\textwidth]{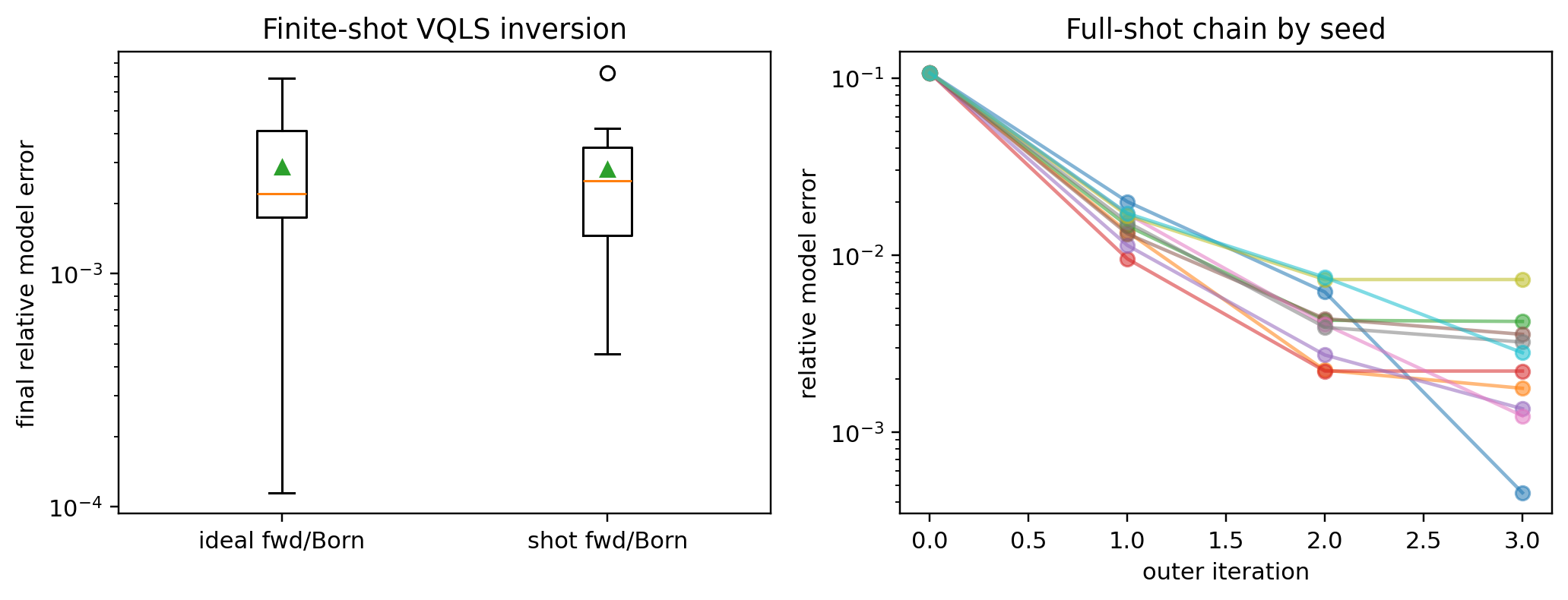}
\caption{Four-parameter hybrid inversion over ten predeclared seeds.
Left: final errors using ideal or Bernoulli-sampled forward/Born
probabilities, both with finite-shot VQLS (triangles: means; lines:
medians). Right: the full finite-shot trajectories.}
\label{fig:sisc-finite-shot-vqls}
\end{figure}

\subsection{Assembled-matrix statevector agreement}

Dense and Qiskit~\cite{JavadiAbhari2024Qiskit}
\texttt{Statevector}/\texttt{Operator} evaluations use identical matrices.
In five of the six archived cases, the discrepancies are recorded explicitly
and remain at the level of double-precision matrix arithmetic; for the largest
Marmousi case, the archive records only a pass against a \(10^{-14}\)
tolerance. Table~\ref{tab:sisc-qiskit-stack} lists representative rows, and
Supplementary Table~S7 reports all six cases. These checks apply assembled
operators to larger states, while the compiled prototype separately supplies
gate decompositions.

\begin{table}[!t]
\centering
\caption{Representative statevector checks; complete results appear in
Supplementary Table~S7.}
\label{tab:sisc-qiskit-stack}
\scriptsize
\setlength{\tabcolsep}{2.6pt}
\renewcommand{\arraystretch}{0.95}
\begin{tabular}{@{}L{0.23\textwidth}L{0.20\textwidth}L{0.25\textwidth}L{0.24\textwidth}@{}}
\toprule
Case & Scale & Agreement result & Role\\
\midrule
\(4^2,N_p=3\) & \(144\to256\) amplitudes; data/model \(24/16\) & Born \(1.17\times10^{-16}\), Hessian \(1.33\times10^{-16}\), RK4 tangent \(3.01\times10^{-5}\) & dense--statevector agreement with RK4 tangent check\\
Geology \(42^2,N_p=3\) & \(15876\to16384\) amplitudes; data/model \(1764/1764\) & forward/Born/Hessian \(6.40\times10^{-17},6.61\times10^{-17},1.43\times10^{-16}\) & structured model statevector agreement\\
Marmousi \(96\times28,N_p=2\) & \(16128\to16384\) amplitudes; 4800 pressure samples & stored tolerance pass (\(10^{-14}\)) & Marmousi backend agreement\\
\bottomrule
\end{tabular}
\end{table}
\FloatBarrier

\subsection{Measurement, sketching, and quadrature}

At retained fraction \(0.5\), ten draws give median update correlations of
\(0.9822\), \(0.9793\), and \(0.9271\) for the Rademacher, Gaussian, and
subsampling sketches, respectively. The corresponding relative reductions
in the nonlinear pressure objective are \(0.9603\), \(0.9652\), and
\(0.9504\). These results quantify update preservation for the reported map,
residual, regularization, and sketch families; the Supplement reports the
full trends.

\section{Discussion}

Receiver calibration belongs to the derivative of the physical data map.
The same energy scaling that enables the Hamiltonian representation defines
pressure as the coefficient-dependent row \(p=c\pi\), whose derivative
contains both \(c_0\delta\pi\) and \(\delta c\,\pi_0\). Omitting the second
term therefore differentiates a different observable rather than a
lower-order approximation to the same one. This distinction persists from
the order-one finite-difference error to the local inverse step: the omitted
map changes the regularized Gauss--Newton direction by relative error
\(1.296\), reduces its correlation with the calibrated direction to
\(0.068\), and yields a direction rejected by the calibrated-objective line
search. The complete direction achieves a relative reduction of \(0.979\)
in the nonlinear pressure objective in the same test.

The numerical evidence separates three implementation levels. Finite
differences, independently coded tangent and reverse-adjoint recurrences,
autodiff actions, and explicit Jacobians verify compatible \(Jv\),
\(J^\top r\), and \(J^\top Jv\) actions for the same finite-dimensional map.
Dense and Qiskit \texttt{Statevector}/\texttt{Operator} evaluations then
apply identical assembled matrices at larger dimensions, measuring backend
agreement rather than a separate discretization or gate synthesis. Finally,
the compiled small instance tests structured preparation, product-formula
propagation, derivative-LCU insertion, separate measurement of the propagated
and receiver-calibration terms, and restoration of their physical scales.
The finite-shot inversion tests how these measured quantities enter a complete
local update.

The readout results clarify the intended output of the quantum subroutine.
Entrywise reconstruction of a complete receiver--time gather scales at least
linearly with the number of reported values, whereas a local update depends on
a smaller prescribed collection or sketch of observables. For the tested
maps, half-size Rademacher and Gaussian sketches retain median update
correlations near \(0.98\) and preserve most of the nonlinear pressure
objective decrease. These empirical results apply to the reported map,
residual, regularization, and sketch families and support selected local-update
observables as a practical measurement target for the reported setting.

Opposing propagated and calibration terms increase relative sampling
uncertainty, while the conservative line search compares late improvements
against fresh noisy estimates. Both effects are consistent with the two
observed zero-step decisions, while the present runs leave their individual
contributions unresolved.

The periodic setting provides a clean finite-dimensional specialization in
which the centered spatial block is exactly skew-adjoint.
The mathematical and numerical claims concern a finite-dimensional,
constant-density, fixed-interface problem. The compiled circuit uses a fixed
32-amplitude instance,
classically precomputed Pauli coefficients, ideal circuit probabilities,
unrestricted connectivity, and Bernoulli sampling. The resource model specifies
scalable state preparation, coefficient access, block encoding, and
amplification; device-noise analysis is deferred to hardware-mapped
implementations. The four-parameter VQLS instance exposes the complete
measurement and update chain. Runtime and asymptotic advantage require separate
scaling studies. Variable density, coefficient-dependent damping or
transducers, hardware-mapped circuits, and explicitly augmented PML
formulations require corresponding derivative, stability, and resource
analyses.

\section{Conclusion}

We constructed a pressure-consistent operator-and-readout interface from
Schr\"odingerised acoustic propagation to the Born, adjoint, and
Gauss--Newton actions required by a local FWI step. The energy scaling that
produces the Hamiltonian representation also makes pressure coefficient
dependent: differentiating \(p=c\pi\) gives \(c_0\delta\pi\) and the direct
receiver-calibration term \(\delta c\,\pi_0\). Retaining both yields
compatible \(Jv\), \(J^\top r\), and \(J^\top Jv\) actions; omitting
calibration changes the observable and can qualitatively alter the update.

Conditional analysis, periodic second-order refinement, independent
discrete-action checks, statevector comparisons, and a compiled pressure--Born
circuit verify complementary parts of the construction. Finite-shot outputs
drive a four-parameter hybrid update, while the resource analysis makes the
access, normalization, and selected-measurement assumptions explicit.
Connecting Hamiltonian wave propagation to local FWI therefore requires a
consistent derivative of the physical receiver and extraction of the update
observables at an accounted measurement cost. Scaling and hardware studies
are needed to assess quantum speedup and device-level performance.

\section*{Data and Code Availability}

A versioned reviewer archive containing analysis scripts, source snapshots,
settings, seeds, logs, machine-readable summaries, environment records, and
file digests is available to the editors and referees on request. The
Supplementary Materials record the public source and checksum of the excluded
third-party Marmousi-derived input. A permanent public repository with a
persistent identifier will follow initial submission; no reproducibility badge
is requested for the initial submission.

\section*{Funding and Acknowledgments}

Computing resources were supported in part by Shenzhen Loop Area Institute (SLAI).

\section*{Declarations}

\paragraph{Competing interests}
The authors declare no competing interests.

\paragraph{Use of AI tools}
During manuscript preparation, the authors used AI-assisted tools for language polishing, structural feedback, code review, and local verification-script generation. The authors independently checked the mathematical statements, references, code, numerical outputs, and figures. The authors assume responsibility for all content.

\bibliographystyle{siamplain}
\fontsize{8}{8.1}\selectfont
\bibliography{references}

\end{document}


\renewcommand{\thetable}{S\arabic{table}}
\renewcommand{\thefigure}{S\arabic{figure}}
\maketitle

\section*{S1. Supplementary evidence and organization}

This supplement provides the numerical protocols and extended evidence
for the main manuscript. Sections S2--S4 document the compiled
pressure-Born circuit, finite-shot local updates, robustness controls,
and the four-parameter \(4\times4\) block reconstruction. Sections
S5--S7 report identical-matrix statevector checks, independent
discrete-action verification, convergence, measurement, sketching, and
quadrature diagnostics. Section S8 gives compact classical reference
studies, and Section S9 maps each reported result to its archived run.
The theoretical statements remain in the main manuscript.
References in this supplement are numbered independently of those in the
main manuscript.

The central calculation uses the constant-density acoustic
\(\pa\)-space Hamiltonian with calibrated \(c\pi\) pressure rows, the
receiver-calibration derivative, and compatible Born, adjoint, and
Gauss--Newton actions. The compiled small instance realizes structured
preparation, product-formula propagation, a
\textsc{Prepare--Select--Unprepare} derivative LCU, and pressure-overlap
measurements. The larger Qiskit cases apply identical
classically assembled matrices through
\texttt{Statevector}/\texttt{Operator} and serve as statevector
agreement checks.

As in the main manuscript, a \emph{selected-output measurement} is a
prescribed subset or linear sketch of receiver--time pressure
functionals used to form the local update. It is distinct from the LCU
\textsc{Select} unitary and selector-zero block.

\paragraph{Error and correlation conventions}
Unless a table states otherwise, vector relative errors use the
Euclidean norm of the named reference, linear residuals use
\(\|A\widehat x-b\|_2/\|b\|_2\), and correlations are Euclidean
cosines. The compiled-circuit omission ratio in S2 normalizes the
missing receiver contribution by the omitted-term response; the smooth
spatial omission study in S6 normalizes by the complete refined
reference. These two ratios test the same missing term under different
reported normalizations and are labeled separately.

\begin{table}[H]
\centering
\scriptsize
\caption{Map of the supplementary evidence. The table records the
quantity tested, the role of each diagnostic, and its scope.}
\label{tab:supp-diagnostic-inventory}
\begin{tabular}{@{}L{0.22\textwidth}L{0.30\textwidth}L{0.21\textwidth}L{0.18\textwidth}@{}}
\toprule
Diagnostic block & Quantity tested & Role & Scope \\
\midrule
Calibrated acoustic checks & Born finite differences, adjoint identities, Hessian actions, receiver-calibration derivative & Main operator verification & Finite-dimensional fixed-interface or pixel derivative \\
Compiled pressure-Born circuit & Structured preparation, product-formula propagation, derivative LCU, calibrated receiver, and selected-output measurement & Gate-level implementation check & 5 system qubits; 9 qubits in the largest interferometer \\
Finite-shot VQLS inversion & Finite-shot forward/Born outputs, VQLS cost and signed-amplitude readout, line search, and model update & Main hybrid calculation & Four parameters, 32 observations, ten predeclared seeds \\
Eight-parameter controls & Shot sweep, receiver-calibration omission, and dense-observation mismatch & Robustness and failure-mode checks & One-dimensional classically orchestrated normal solve \\
\(4\times4\) block reconstruction & Four fixed quadrant parameters on a two-dimensional physical grid & Spatial visualization & Four quadrant parameters \\
Source-convention audit & Fixed physical forcing with and without its source derivative; fixed first-order source states & Born-map definition check & Finite-dimensional forced acoustic system \\
Independent discrete-action verification & Manual RK4 \(Jv\), manual discrete-adjoint \(J^\top r\), composed \(J^\top Jv\), and autodiff on the same \(c\pi\) map & Classical correctness check & float64, multiple directions and grids \\
Separated convergence & Spatial, auxiliary and Duhamel-quadrature components of the consistency estimate & Error-budget diagnostic & clean periodic spatial order plus selected component trends \\
Qiskit pressure-observable calculation & Source state, acoustic \(\pa\)-Hamiltonian, Duhamel Born insertion, frozen calibrated receiver state and receiver derivative & Assembled-matrix statevector agreement & Direct finite-dimensional matrix evolution \\
Finite-shot readout & Bernoulli/Hadamard-style scalar observable noise on stored statevector arrays & Measurement-target diagnostic & Post-processing scalar estimator \\
Compressed sketches & Preservation of sketched gradient and Hessian-action directions & Readout diagnostic & Empirical sketch behavior for tested operators \\
Classical reference studies & CGLS and L-BFGS pressure fits with compact historical context & Classical context & Distinct classical maps and tasks; circuit evidence appears in S2--S7 \\
\bottomrule
\end{tabular}
\end{table}

\section*{S2. Compiled pressure-Born circuit and finite-shot hybrid inversion}

The compiled prototype uses four periodic spatial points, four
auxiliary \(\pa\) points, and two acoustic components. The 32 system
amplitudes occupy five qubits; the derivative LCU uses three selector
qubits, and selected-overlap measurement adds one interferometric
ancilla. The largest measured circuit therefore uses nine qubits.
Source and receiver states are prepared with structured \(X\) and
\(R_y\) rotations. An explicit second-order Pauli product formula evolves
the Hamiltonian, and a \textsc{Prepare--Select--Unprepare} block inserts
the directional derivative. Dense matrix exponentials provide
independent reference values for these compiled operations.

For this fixed small instance, the Hamiltonian and derivative Pauli
coefficients are computed classically from the assembled matrices. The
reported depths and CX counts use unrestricted connectivity in the basis
\texttt{rz}, \texttt{sx}, \texttt{x}, and \texttt{cx}; device-specific
mapping will change these counts.
This finite-size Pauli compilation instantiates the propagation,
derivative insertion, receiver preparation, scale restoration, and
measurement components of the main construction. The sparse row/value
oracles in main Section 4 provide the corresponding matrix-free access
model for larger sizes.

\begin{table}[H]
\centering
\scriptsize
\caption{Compiled pressure-Born component checks and representative
resources. The Born reference uses the same single-node midpoint
Duhamel approximation as the circuit. The complete circuit-FD error is
normalized by the complete circuit finite difference; the omitted-term
ratio \(2.997\) is
\(\lvert B_{\rm full}-B_{\rm prop}\rvert/\lvert B_{\rm prop}\rvert\).}
\label{tab:supp-gate-components}
\begin{tabular}{@{}L{0.58\textwidth}r@{}}
\toprule
Quantity & Result\\
\midrule
Hamiltonian/derivative Pauli terms & \(14/8\)\\
Hamiltonian Pauli reconstruction error & \(1.5338\times10^{-16}\)\\
Derivative Pauli reconstruction error & \(8.0257\times10^{-17}\)\\
LCU selector-zero block error & \(5.5775\times10^{-15}\)\\
Forward-state relative error, four product-formula repetitions & \(3.6339\times10^{-7}\)\\
Complete Born relative error versus dense midpoint reference & \(1.8871\times10^{-6}\)\\
Complete derivative error versus circuit finite difference & \(2.80\times10^{-4}\)\\
Omitted-term ratio, \(\lvert B_{\rm full}-B_{\rm prop}\rvert/\lvert B_{\rm prop}\rvert\) & \(2.997\)\\
Derivative-LCU normalization \(\alpha_v\) & \(0.165\)\\
LCU selector-zero probability & \(0.09050\)\\
Forward interferometer: qubits/depth/CX & \(6/7{,}258/3{,}295\)\\
Born-LCU interferometer: qubits/depth/CX & \(9/20{,}334/9{,}373\)\\
\bottomrule
\end{tabular}
\end{table}

The pressure Born value for the reported direction separates into a
propagated contribution \(-1.430537\times10^{-4}\) and a
receiver-calibration contribution \(4.287825\times10^{-4}\), giving
the complete value \(2.857288\times10^{-4}\). Their opposing signs
make this direction sensitive to finite-shot cancellation and motivate
the separate-estimator variance formula in the main text.

The reported shot counts are raw Bernoulli samples from the Hadamard or
interference circuits. The selector-zero probability \(0.09050\) enters
the known LCU normalization restored in the reported estimator. All
Bernoulli measurement records are retained, so this probability is not a
postselection acceptance rate and does not reduce the stated sample count.

The four-parameter inversion uses 32 selected pressure
observations, ten predeclared seeds \(46200\)--\(46209\), and at most
three outer iterations. Each selected forward/Born overlap uses
\(10^5\) shots. The two-qubit VQLS uses \(5{,}000\) shots per Pauli cost
term and \(5\times10^4\) shots for validation, scale recovery, and each
signed real amplitude. Its cost, scale, and update direction are
therefore finite-shot quantities. The VQLS optimizer and readout use
Bernoulli samples; exact statevectors generate ideal probabilities and
serve as post-run diagnostics.
SPSA parameter updates, selected-output table assembly, normal-system
construction, line search, and model refresh are classical.

At every iteration and for every seed, the four-parameter system uses
\[
 \lambda_k=10^{-4}\max\!\left\{
 \frac{\operatorname{tr}(\widetilde J_k^\top\widetilde J_k)}{4},
 10^{-14}\right\}.
\]

The line search uses the current finite-shot forward estimate as
its incumbent loss. Each declared candidate, including \(\alpha=0\), is
then evaluated by the mean of three fresh forward estimates and is
accepted when that mean is smaller than the incumbent. This fixed,
conservative rule records two late \(\alpha=0\) selections below. The
opposing pressure contributions increase relative uncertainty in
cancellation-sensitive directions, and fresh noisy loss estimates make
late improvements harder to distinguish. The two late zero-step decisions
are consistent with their combined effect.

For \(P\) model parameters, \(N_\alpha\) trial step sizes, and
\(N_{\rm rep}\) forward estimates per trial, the per-update workload is
\[
 C_{\rm update}=C_{\rm fwd}+P C_{\rm Born}+C_{\rm assemble}
 +C_{\rm VQLS}+C_{\rm readout}
 +N_\alpha N_{\rm rep}C_{\rm fwd}.
\]
Here \(P\) controls the Born columns; the Duhamel order \(Q\) and
product-formula repetitions control propagation and derivative costs.
The conditioning of \(A_k\), the solver tolerance, and overlap precision
govern VQLS effort and shot allocation. The finite-shot protocol fixes
\(P=4\), \(N_\alpha=5\), and \(N_{\rm rep}=3\).

The archive also contains a fixed-system diagnostic that freezes
the same four-parameter regularized matrix and right-hand side for every
solver. The matrix has \(\kappa(A_0)=1.1074\), and its stored
double-precision direct step is
\([0.059725,0.128832,-0.131283,0.109233]\). Table~\ref{tab:supp-fixed-vqls}
places a zero-start conjugate-gradient solve~\cite{HestenesStiefel1952},
the archived exact-state VQLS, and ten finite-shot VQLS replicates on
this common algebraic problem.

\begin{table}[H]
\centering
\scriptsize
\caption{Matched solver comparison on the same archived \(A_0,b_0\).
The finite-shot row reports means over ten predeclared replicates; step
errors and correlations use the direct step as the common reference.}
\label{tab:supp-fixed-vqls}
\begin{tabular}{@{}L{0.27\textwidth}rrrr@{}}
\toprule
Method & Relative residual & Step error & Correlation & Effort\\
\midrule
Direct dense & \(7.50\times10^{-17}\) & 0 & 1 & one solve\\
CG, zero initial vector & \(1.12\times10^{-16}\) & \(1.43\times10^{-16}\) & 1 & 4 iterations\\
Exact-state VQLS & \(7.90\times10^{-9}\) & \(7.97\times10^{-9}\) & 1 & depth 20; 4 CX\\
Finite-shot VQLS & 0.111928 & 0.112186 & 0.993486 & \(1.362\times10^8\) shots\\
\bottomrule
\end{tabular}
\end{table}

The finite-shot residual has median 0.109793 and range
0.084645--0.156443. The step error has median 0.109934 and range
0.085249--0.155780; the correlation has median 0.993952 and range
0.987856--0.996374. Its frozen-system signed-direction readout error has mean/maximum
\(6.5901\times10^{-3}/1.0788\times10^{-2}\). The matched
comparison isolates algebraic, variational, and sampling accuracy on one
system, while its workload columns report measurement effort.

Table~\ref{tab:supp-shot-ledger} reconstructs the per-update
ledger. The nine sampled nonidentity cost terms give 18,000 SPSA calls
and 909 validation calls. The frozen-system diagnostic uses six matrix
Pauli terms and 11 scale calls, giving 18,924 total VQLS calls and
\(1.362\times10^8\) shots per replicate. Finite-shot Jacobian
perturbations produce ten matrix terms in the finite-shot outer path; its
scale recovery therefore uses 19 calls.

\begin{table}[H]
\centering
\scriptsize
\caption{Measurement ledger for one four-parameter outer update.
The five line-search candidates are each evaluated three times on 32
selected pressure values.}
\label{tab:supp-shot-ledger}
\begin{tabular}{@{}L{0.38\textwidth}rrr@{}}
\toprule
Component & Calls & Shots per call & Equivalent shots\\
\midrule
SPSA cost, \(2{,}000\) evaluations \(\times\) 9 terms & \(18{,}000\) & \(5{,}000\) & \(9.000\times10^7\)\\
Validation, 101 evaluations \(\times\) 9 terms & 909 & \(50{,}000\) & \(4.545\times10^7\)\\
Scale recovery & 19 & \(50{,}000\) & \(9.50\times10^5\)\\
Signed four-component readout & 4 & \(50{,}000\) & \(2.00\times10^5\)\\
Selected forward data & 32 & \(100{,}000\) & \(3.20\times10^6\)\\
Selected Born table & 128 & \(100{,}000\) & \(1.28\times10^7\)\\
Line search, \(5\times3\times32\) & 480 & \(100{,}000\) & \(4.80\times10^7\)\\
\midrule
Total per update & \(19{,}572\) & -- & \(2.006\times10^8\)\\
\bottomrule
\end{tabular}
\end{table}

\begin{table}[H]
\centering
\scriptsize
\caption{Four-parameter finite-shot VQLS inversion. All ten
runs use the same declared acquisition and shot budgets.}
\label{tab:supp-formal-vqls}
\begin{tabular}{@{}L{0.58\textwidth}r@{}}
\toprule
Quantity & Result\\
\midrule
Truth & \([1.04,1.14,0.89,1.08]\)\\
Initial model & \([0.98,1.01,1.02,0.97]\)\\
Initial relative model error & \(1.068004\times10^{-1}\)\\
Final relative error, mean/median & \((2.806901/2.505864)\times10^{-3}\)\\
Final relative error, minimum/maximum & \(4.535616\times10^{-4}/7.264611\times10^{-3}\)\\
Sample standard deviation & \(1.948060\times10^{-3}\)\\
Ideal-forward/Born group, mean/median final error & \((2.867827/2.199169)\times10^{-3}\)\\
Ideal-forward/Born group, minimum/maximum final error & \(1.150672\times10^{-4}/6.867210\times10^{-3}\)\\
Ideal-forward/Born group, late proposals with \(\alpha=0\) & \(0\)\\
Runs improving the initial model & \(10/10\)\\
Update correlation, mean/median/min/max & \(0.887358/0.970229/0.242322/0.996346\)\\
Outer-inversion signed-amplitude readout error, mean/maximum & \(6.7557\times10^{-3}/1.2998\times10^{-2}\)\\
Late line-search proposals with \(\alpha=0\) & \(2\)\\
\bottomrule
\end{tabular}
\end{table}

The minimum update correlation, 0.242322, occurs at the third outer
update of seed 46208. The fixed line search selects \(\alpha=0\), so the
model remains at its preceding value and the run finishes with relative
error \(7.2646\times10^{-3}\), still well below the initial
\(1.0680\times10^{-1}\). The other zero-step selection occurs at the
third update of seed 46203, where the update correlation is 0.949713 and
the final relative error is \(2.2083\times10^{-3}\). Each row is an
outer-update direction; the ten predeclared seeds remain the independent
units for the final-error distribution.

\begin{figure}[H]
\centering
\includegraphics[width=0.82\textwidth]{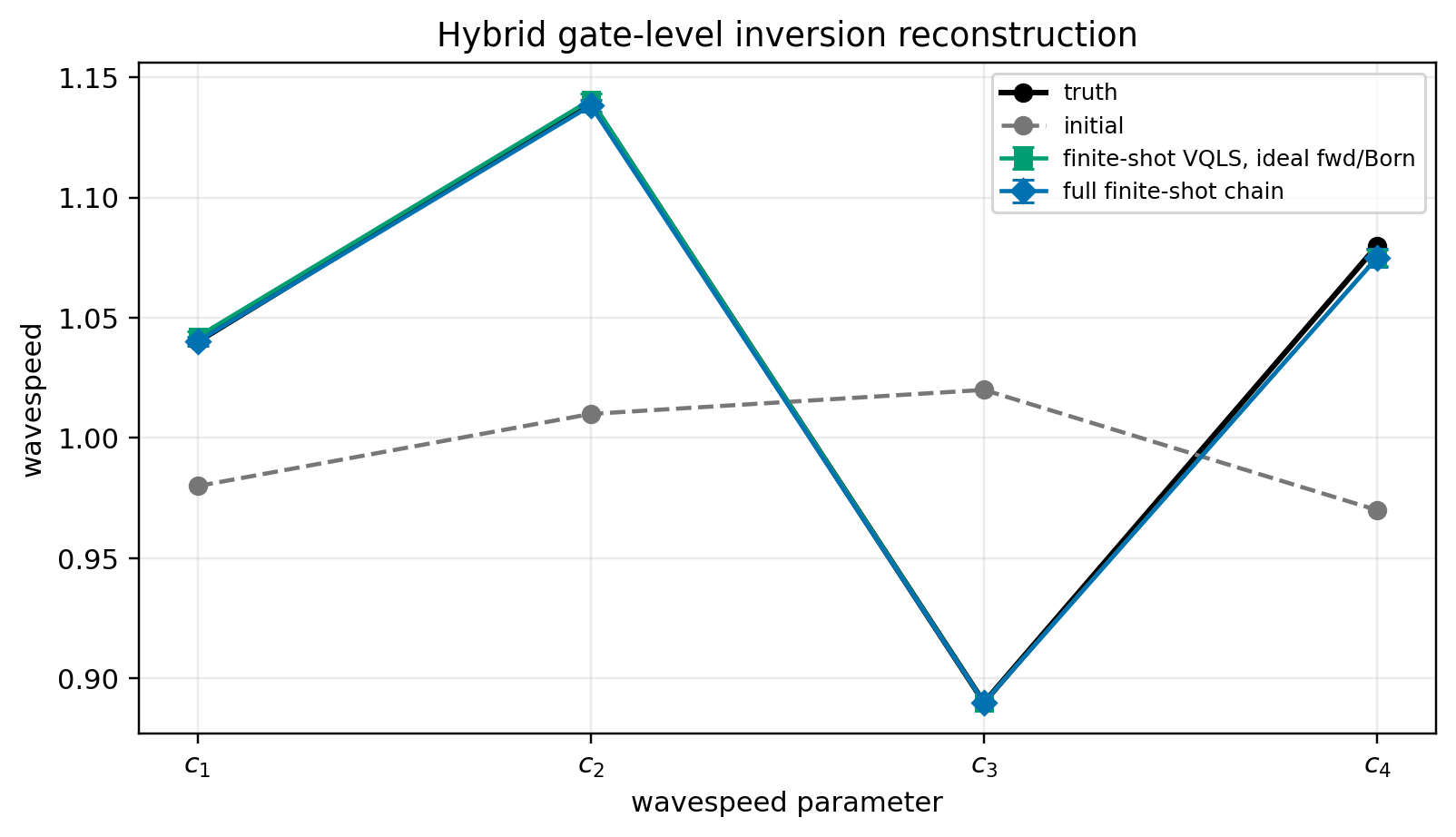}
\caption{Four-parameter hybrid reconstruction. The finite-shot
mean and standard deviation summarize the ten predeclared runs.}
\label{fig:supp-vqls-profile}
\end{figure}

\section*{S3. Eight-parameter shot, calibration, and mismatch controls}

The eight-parameter one-dimensional system is used to test shot scaling
and two controls that would be obscured by the deliberately well
conditioned four-parameter VQLS case. These runs use the same compiled
forward/Born measurement model but solve the small regularized normal
system classically. Every finite-shot row contains ten seeds. The
completion fraction records runs that accept all scheduled updates; an
early \(\alpha=0\) terminates the remaining run and preserves the
improvement already obtained.
The notation T\(r\)/Q\(q\) denotes \(r\) product-formula repetitions
and \(q\) Duhamel quadrature nodes, respectively.

\begin{table}[H]
\centering
\scriptsize
\caption{Eight-parameter reconstruction controls. The dense-observation
rows generate data with a dense exponential and invert with the
product-formula/Born circuit model.}
\label{tab:supp-eight-parameter}
\begin{tabular}{@{}L{0.32\textwidth}rrrr@{}}
\toprule
Protocol & Mean final error & Median final error & Completion & First-step cosine\\
\midrule
Initial model & \(1.01539\times10^{-1}\) & \(1.01539\times10^{-1}\) & -- & --\\
Ideal T4/Q4 & \(1.72563\times10^{-6}\) & \(1.72563\times10^{-6}\) & \(1.0\) & \(1.0\)\\
\(10^4\) shots & \(6.23736\times10^{-2}\) & \(5.76736\times10^{-2}\) & \(0.4\) & \(0.80698\)\\
\(10^5\) shots & \(1.42839\times10^{-2}\) & \(1.32571\times10^{-2}\) & \(0.4\) & \(0.97354\)\\
\(10^6\) shots & \(7.4322\times10^{-3}\) & \(6.9175\times10^{-3}\) & \(0.4\) & \(0.99676\)\\
Dense observations, ideal inversion & \(2.23251\times10^{-5}\) & \(2.23251\times10^{-5}\) & \(1.0\) & --\\
Dense observations, \(10^5\) shots & \(1.84831\times10^{-2}\) & \(2.10497\times10^{-2}\) & \(0.7\) & \(0.97177\)\\
Calibration omitted, \(10^5\) shots & \(1.01539\times10^{-1}\) & \(1.01539\times10^{-1}\) & \(0.0\) & \(0.9927^{\rm omit}\)\\
\bottomrule
\end{tabular}
\end{table}

Increasing the shot budget improves both the reconstruction error and
the first-step cosine, while the completion fraction remains \(0.4\)
for all three shot levels. In the dense-observation test, the
observation generator differs from the inversion propagator, making this a
discretization-mismatch control. For the omitted row, the superscript indicates that 0.9927
compares each finite-shot omitted-term step with the ideal step from the
same omitted-term map. Against the full calibrated ideal step, the ideal
omitted-term direction instead has cosine \(-0.041856\) and relative
error 6.5871; the ten finite-shot omitted-term directions have mean
cosine \(-0.044980\) and mean relative error 6.4923. All ten line
searches reject the first update. Thus the sampled omitted-term solver
reproduces its own map, while the calibrated pressure-objective line search
assigns its direction a zero step.

\begin{figure}[H]
\centering
\includegraphics[width=0.92\textwidth]{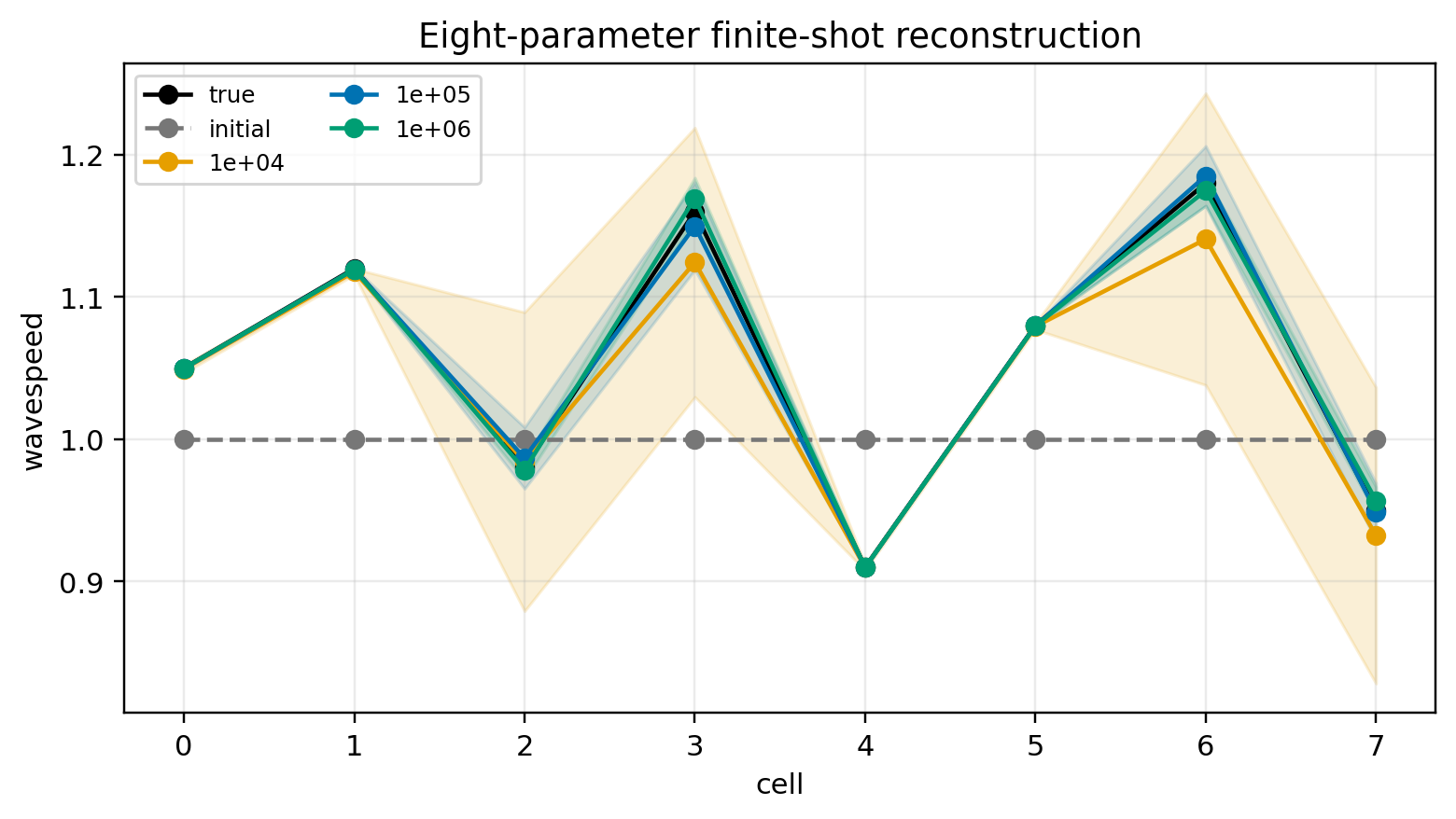}
\caption{Eight-parameter reconstructions across the three shot budgets.
Curves show the mean final model over ten seeds; shaded bands show one
sample standard deviation on either side of the mean.}
\label{fig:supp-eight-profiles}
\end{figure}

\begin{figure}[H]
\centering
\includegraphics[width=0.92\textwidth]{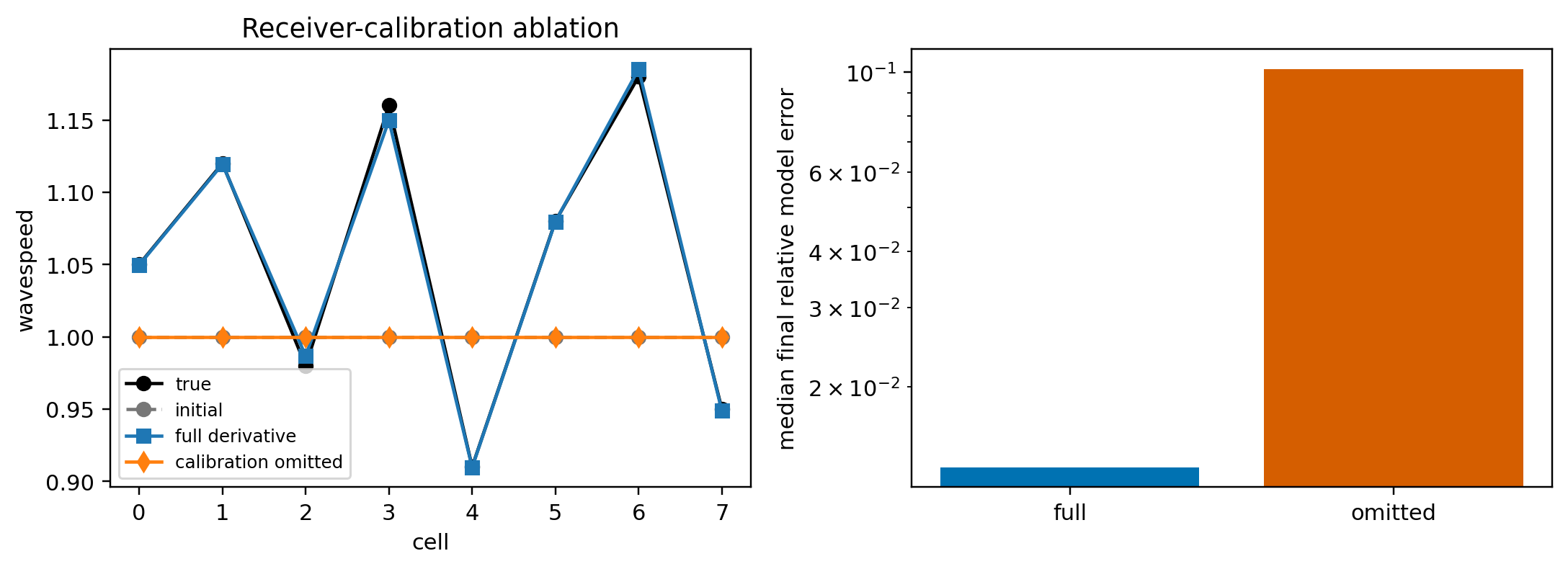}
\caption{Receiver-calibration ablation for the eight-parameter control.
The complete derivative produces accepted updates; the omitted-term
Jacobian is rejected by the calibrated pressure-objective line search.}
\label{fig:supp-calibration-inversion}
\end{figure}

\section*{S4. Four-parameter \(4\times4\) block reconstruction}

The two-dimensional visualization uses a \(4\times4\) physical grid
with four fixed quadrant parameters: top-left, top-right, bottom-left,
and bottom-right. The visualization therefore represents four block
parameters. The ideal T2/Q2 final relative error
is \(2.31554\times10^{-6}\). Across ten \(10^5\)-shot runs, the mean,
median, and sample standard deviation are
\(1.91191\times10^{-2}\), \(1.59314\times10^{-2}\), and
\(1.09661\times10^{-2}\); four runs accept every scheduled update.

\begin{figure}[H]
\centering
\includegraphics[width=0.98\textwidth]{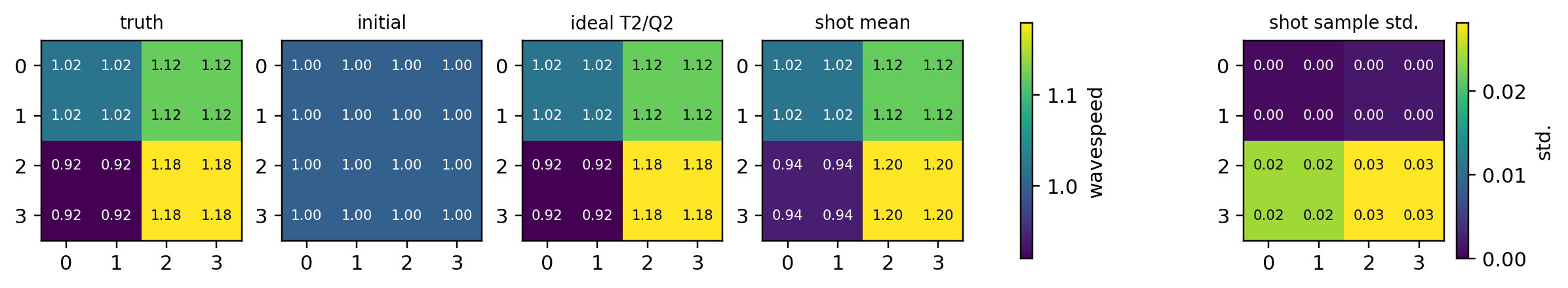}
\caption{Four-parameter block reconstruction on a \(4\times4\)
physical grid. The right panel reports seed-to-seed standard deviation;
repeated values within each quadrant reflect the fixed block
parameterization.}
\label{fig:supp-block-image}
\end{figure}

\section*{S5. Extended calibrated pressure-observable statevector diagnostics}

Table~\ref{tab:supp-qiskit-stack} expands the compressed Qiskit summary
in the main text. All rows in this section use the acoustic
\(\pa\)-space Hamiltonian and calibrated pressure rows. The physical
state dimension is padded to the next power-of-two statevector size. The
dense and Qiskit routes apply identical finite-dimensional matrices, so
their discrepancies measure statevector-backend agreement.
For the \(96\times28\) case, the archived result is a pass against the
\(10^{-14}\) tolerance; the attained discrepancy is unavailable.

\begin{table}[H]
\centering
\scriptsize
\caption{Extended calibrated pressure-observable Qiskit statevector
agreement. Entries report Qiskit-versus-dense discrepancies for identical
assembled matrices or the archived tolerance result. The final column
reports an update-consistency diagnostic for the same discrete map.}
\label{tab:supp-qiskit-stack}
\begin{tabular}{@{}L{0.18\textwidth}L{0.20\textwidth}L{0.26\textwidth}L{0.17\textwidth}L{0.10\textwidth}@{}}
\toprule
Case & State and data scale & Backend agreement & Additional check & SSIM, initial \(\to\) updated \\
\midrule
\(4^2,N_p=3\) & \(144\to256\) amplitudes; data/model \(24/16\) & Born \(1.17\times10^{-16}\), Hessian \(1.33\times10^{-16}\) & RK4 tangent error \(3.01\times10^{-5}\); Born FD \(3.37\times10^{-4}\); adjoint \(1.08\times10^{-15}\) & small agreement case \\
Structured geology \(24^2,N_p=3\) & \(5184\to8192\); \(720/576\) & forward/Born/Hessian \((1.43,1.55,2.53)\times10^{-16}\) & Born FD \(3.16\times10^{-4}\) at \(\epsilon=10^{-2}\) & \(0.709\to0.723\) \\
Structured geology \(42^2,N_p=3\) & \(15876\to16384\); \(1764/1764\) & \(6.40\times10^{-17},6.61\times10^{-17},1.43\times10^{-16}\) & Born FD \(1.11\times10^{-4}\) & \(0.812\to0.817\) \\
Marmousi \(52^2,N_p=2\) & \(16224\to16384\); \(2184/2704\) & \((4.91,4.88,7.21)\times10^{-17}\) & dense/Qiskit agreement & \(0.4561\to0.4695\) \\
Marmousi \(64\times42,N_p=2\) & \(16128\to16384\); \(3072/2688\) & \((4.70,4.44,2.92)\times10^{-17}\) & dense/Qiskit agreement & \(0.4608\to0.4716\) \\
Marmousi \(96\times28,N_p=2\) & \(16128\to16384\); 4800 pressure samples & passed stored \(10^{-14}\) tolerance & Marmousi backend agreement & \(0.5258\to0.5342\) \\
\bottomrule
\end{tabular}
\end{table}

\begin{figure}[H]
\centering
\includegraphics[width=0.95\textwidth]{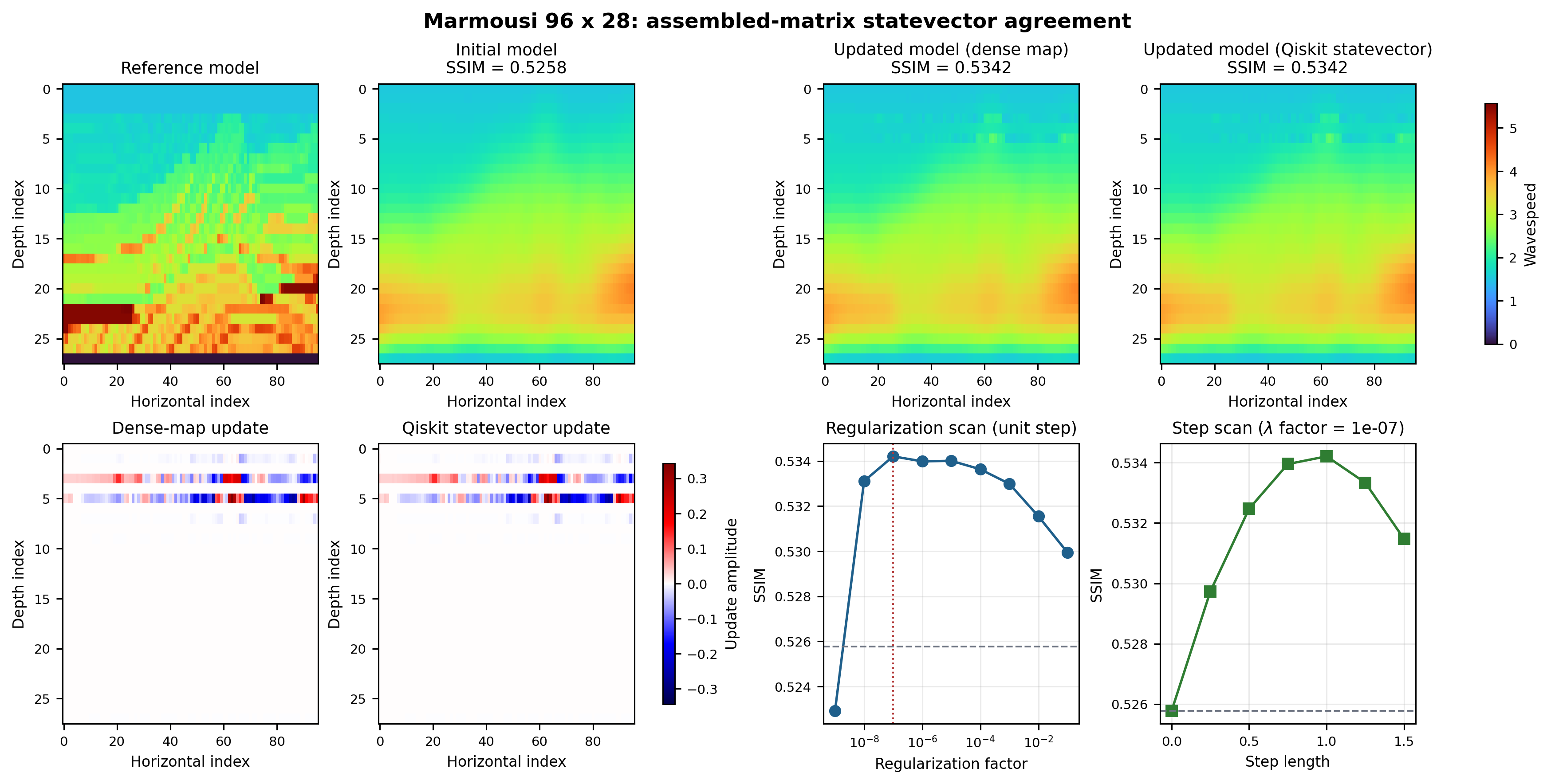}
\caption{Marmousi \(96\times28,N_p=2\) assembled-matrix statevector
agreement for the calibrated pressure observable. Dense and Qiskit
\texttt{Statevector}/\texttt{Operator} apply identical finite-dimensional
Hamiltonian, Duhamel Born, and receiver matrices. The displayed model
updates test backend agreement for this map; the compiled gate-level
prototype is reported separately in Section~S2.}
\label{fig:supp-marm96}
\end{figure}

\section*{S6. Numerical scale and calibrated acoustic checks}

Table~\ref{tab:supp-scales} records the main numerical scales. The
matrix-free rows test the same \(Jv\), \(J^\top r\), and
\(J^\top Jv\) interface at scales where explicit Born-matrix storage is
avoided.

\begin{table}[H]
\centering
\scriptsize
\caption{Numerical scales for the reported calibrated pressure-observable calculations. \(N=n_zn_x\), \(M\) is the number of receiver-time pressure samples, and \(Q\) is the Duhamel quadrature count when an explicit Born matrix is assembled.}
\label{tab:supp-scales}
\begin{tabular}{@{}L{0.24\textwidth}rrrrL{0.30\textwidth}@{}}
\toprule
Diagnostic & \(N\) & \(N_p\) & \(Q\) & \(M\) & State/storage note \\
\midrule
Small Qiskit agreement \(4^2\) & 16 & 3 & 3 & 24 & \(144\to256\) amplitudes \\
Structured Qiskit \(42^2\) & 1764 & 3 & 3 & 1764 & \(15876\to16384\) amplitudes \\
\(52^2\) Marmousi statevector & 2704 & 2 & 3 & 2184 & \(16224\to16384\) amplitudes \\
\(64\times42\) Marmousi statevector & 2688 & 2 & 3 & 3072 & \(16128\to16384\) amplitudes \\
\(96\times28\) Marmousi statevector & 2688 & 2 & 2 & 4800 & \(16128\to16384\) amplitudes \\
\(64^2\) differentiable matrix-free & 4096 & 3 & -- & 1536 & matrix-free JVP/VJP/JTJv actions \\
\(96^2\) matched CGLS & 9216 & 3 & -- & 6912 & augmented CGLS with \(Jv,J^\top r\) \\
\bottomrule
\end{tabular}
\end{table}

\paragraph{Independent verification of the discrete actions}
The stronger float64 check in Table~\ref{tab:supp-independent-all-actions} differentiates the exact discrete RK4 map under fixed coefficient-independent first-order source states. Its tangent and reverse recurrences are coded independently of PyTorch autodiff and include the propagated pressure term and the direct receiver-calibration contribution. Three independently seeded perturbation/residual pairs are used per grid. The weighted-adjoint checks use nonconstant positive diagonal data and model weights. Centered finite-difference errors for \(Jv\) are at most \(1.42\times10^{-9}\). On the \(8^2\) grid, an explicitly formed Jacobian independently gives \(2.30\times10^{-16}\) relative error for the manual VJP versus \(J^\top r\) and \(2.19\times10^{-16}\) for the manual normal action versus \(J^\top Jv\).

\begin{table}[H]
\centering
\scriptsize
\caption{Independent verification of the same discrete calibrated map
in float64. Each entry is the median/maximum relative defect over three
seeded direction/residual pairs.}
\label{tab:supp-independent-all-actions}
\resizebox{\textwidth}{!}{%
\begin{tabular}{@{}rrrrrr@{}}
\toprule
Grid & \(Jv\) & \(J^\top r\) & \(J^\top Jv\) & Weighted adjoint & Normal symmetry\\
\midrule
\(8^2\)  & \(8.08{\times}10^{-17}/1.01{\times}10^{-16}\) & \(5.09{\times}10^{-16}/8.19{\times}10^{-16}\) & \(4.12{\times}10^{-16}/4.37{\times}10^{-16}\) & \(6.90{\times}10^{-16}/9.01{\times}10^{-16}\) & \(0/1.57{\times}10^{-16}\)\\
\(16^2\) & \(6.13{\times}10^{-17}/6.57{\times}10^{-17}\) & \(5.94{\times}10^{-16}/9.53{\times}10^{-16}\) & \(2.89{\times}10^{-16}/4.98{\times}10^{-16}\) & \(3.21{\times}10^{-16}/7.02{\times}10^{-16}\) & \(1.90{\times}10^{-16}/1.96{\times}10^{-16}\)\\
\(24^2\) & \(5.95{\times}10^{-17}/1.03{\times}10^{-16}\) & \(5.98{\times}10^{-16}/7.73{\times}10^{-16}\) & \(5.87{\times}10^{-16}/8.57{\times}10^{-16}\) & \(7.64{\times}10^{-16}/4.30{\times}10^{-15}\) & \(1.50{\times}10^{-16}/2.81{\times}10^{-16}\)\\
\(32^2\) & \(2.28{\times}10^{-17}/2.91{\times}10^{-17}\) & \(3.75{\times}10^{-16}/6.11{\times}10^{-16}\) & \(5.09{\times}10^{-16}/1.51{\times}10^{-15}\) & \(3.52{\times}10^{-16}/8.09{\times}10^{-16}\) & \(1.61{\times}10^{-16}/2.82{\times}10^{-16}\)\\
\bottomrule
\end{tabular}
}
\end{table}

\paragraph{Source-convention audit}
For fixed physical forcing, the first-order source is \(b_s(c,t)=(c f_s(t),0)^\top\), so its Born derivative contains \((\delta c\,f_s,0)^\top\). Table~\ref{tab:supp-source-convention} tests this term on a \(24^2\) forced acoustic system with 80 time steps, 10 recorded times and \(\Delta t=0.01\). The complete fixed-physical-forcing tangent and the fixed first-order-forcing tangent converge under step halving. Omitting the source derivative from the fixed-physical-forcing map leaves a relative error near one. The omitted source term has relative norm 0.499 in the tested tangent action.

\begin{table}[H]
\centering
\scriptsize
\caption{Born finite-difference errors for the two source conventions. The perturbation direction has relative scale 0.02; \(\epsilon\) scales that direction.}
\label{tab:supp-source-convention}
\begin{tabular}{@{}rccc@{}}
\toprule
\(\epsilon\) & Fixed physical forcing, complete tangent & Fixed physical forcing, source term omitted & Fixed first-order forcing\\
\midrule
1       & \(1.342\times10^{-3}\) & 0.9939 & \(1.171\times10^{-4}\)\\
0.5     & \(6.710\times10^{-4}\) & 0.9948 & \(5.858\times10^{-5}\)\\
0.25    & \(3.355\times10^{-4}\) & 0.9952 & \(2.929\times10^{-5}\)\\
0.125   & \(1.677\times10^{-4}\) & 0.9954 & \(1.465\times10^{-5}\)\\
0.0625  & \(8.387\times10^{-5}\) & 0.9955 & \(7.323\times10^{-6}\)\\
\bottomrule
\end{tabular}
\end{table}

\begin{figure}[H]
\centering
\includegraphics[width=0.72\textwidth]{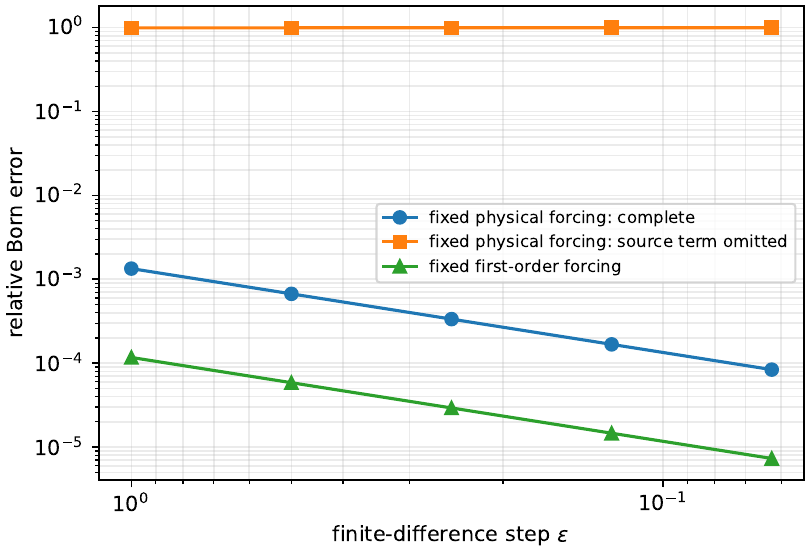}
\caption{Source-convention Born audit. Including the derivative of \(c f_s\)
restores convergence for fixed physical forcing. Holding the injected
first-order source state fixed defines a different, internally consistent
map.}
\label{fig:supp-source-convention}
\end{figure}

\paragraph{Smooth periodic spatial refinement}
Table~\ref{tab:supp-periodic-spatial} isolates the physical-grid component of Proposition~4.2. A smooth low-frequency periodic coefficient, perturbation, and source are used with fixed physical receiver interpolation, recording times, \(N_p=5\), 160 RK4 steps, and \(\Delta t=2.5\times10^{-4}\). Errors are measured against a \(192^2\) reference while the time and auxiliary discretizations remain fixed. The full Born action exhibits the predicted second-order spatial trend, whereas omitting the receiver derivative leaves an \(O(1)\) plateau.

\begin{table}[H]
\centering
\scriptsize
\caption{Smooth periodic physical-grid refinement. Orders use consecutive grid pairs; omitted-row values are relative errors against the complete reference Born action.}
\label{tab:supp-periodic-spatial}
\begin{tabular}{@{}rrrrrr@{}}
\toprule
Grid & Pressure error & Order & Born error & Order & Receiver derivative omitted\\
\midrule
\(16^2\) & \(2.270\times10^{-4}\) & --   & \(3.504\times10^{-2}\) & --   & 1.127\\
\(24^2\) & \(1.057\times10^{-4}\) & 1.89 & \(1.602\times10^{-2}\) & 1.93 & 1.140\\
\(32^2\) & \(5.982\times10^{-5}\) & 1.98 & \(9.009\times10^{-3}\) & 2.00 & 1.145\\
\(48^2\) & \(2.598\times10^{-5}\) & 2.06 & \(3.894\times10^{-3}\) & 2.07 & 1.149\\
\(64^2\) & \(1.392\times10^{-5}\) & 2.17 & \(2.083\times10^{-3}\) & 2.17 & 1.150\\
\bottomrule
\end{tabular}
\end{table}

\begin{table}[H]
\centering
\small
\setlength{\tabcolsep}{2pt}
\caption{Differentiable matrix-free calibrated acoustic \(\pa\)-space
check on a \(64^2,N_p=3\) grid. This implementation reports
matrix-free action accuracy separately from the float64 correctness
table above.}
\label{tab:supp-matrixfree64}
\begin{tabular}{@{}L{0.27\textwidth}L{0.13\textwidth}L{0.13\textwidth}L{0.13\textwidth}L{0.13\textwidth}L{0.15\textwidth}@{}}
\toprule
Grid/backend & Data / model & Born FD & Adjoint & Hessian & Comment \\
\midrule
\(64^2,N_p=3\), differentiable matrix-free & 1536 / 4096 & \(7.51\times10^{-4}\) & \(2.28\times10^{-6}\) & \(5.52\times10^{-5}\) & matrix-free implementation \\
\bottomrule
\end{tabular}
\end{table}

\begin{figure}[H]
\centering
\includegraphics[width=0.94\textwidth]{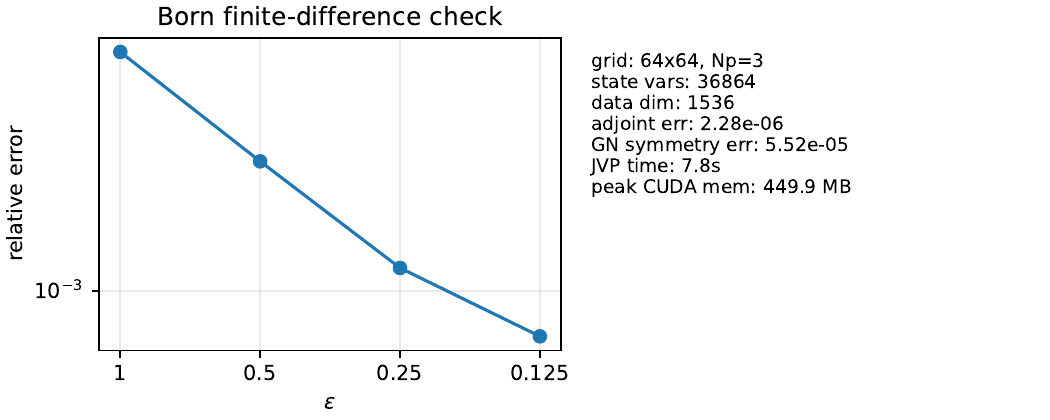}
\caption{Matrix-free calibrated acoustic \(\pa\)-space JVP/VJP diagnostic on a \(64^2\) grid. The check validates the implemented action interface used later by CGLS-style normal equations.}
\label{fig:supp-matrixfree64}
\end{figure}

\section*{S7. Measurement, sketches and Duhamel quadrature}

The main text reports the summary measurement conclusion: selected scalar and sketched Hessian-action observables can preserve local directional information, while entrywise full-gather readout remains output-size limited. Table~\ref{tab:supp-shot-summary} records the largest Bernoulli/Hadamard-style post-processing results used in the main text. The archived generator uses base seed 20260629 (offset by case), 64 independent pseudorandom repetitions, the sample standard deviation with \(\mathrm{ddof}=1\), and the normal-approximation half-width \(1.96s/\sqrt{64}\).

\begin{table}[H]
\centering
\scriptsize
\caption{Bernoulli finite-shot post-processing on stored calibrated \(96\times28,N_p=2\) statevector arrays. Uncertainties are 95\% confidence intervals over 64 repetitions at \(10^6\) shots.}
\label{tab:supp-shot-summary}
\begin{tabular}{@{}L{0.18\textwidth}L{0.19\textwidth}L{0.19\textwidth}L{0.20\textwidth}L{0.17\textwidth}@{}}
\toprule
Case & Born relative error & Gradient relative error & Hessian-action relative error & Hessian-action correlation \\
\midrule
Marmousi \(96\times28,N_p=2\) & \(1.17\times10^{-2}\pm2.64\times10^{-5}\) & \(9.87\times10^{-4}\pm2.21\times10^{-5}\) & \(9.56\times10^{-4}\pm3.03\times10^{-5}\) & \(0.99999954\pm2.94\times10^{-8}\) \\
\bottomrule
\end{tabular}
\end{table}

\begin{table}[H]
\centering
\scriptsize
\caption{Representative compressed Hessian-action sketch correlations
on the \(8^2,N_p=5\) pressure-observable Born matrix with four sources,
eight receivers, six sample times, \(M=192\) data rows, and \(N=64\)
model parameters. These empirical measurement-target diagnostics
accompany the local sketch discussion in the main text.}
\label{tab:supp-sketch}
\begin{tabular}{@{}L{0.15\textwidth}L{0.18\textwidth}L{0.18\textwidth}L{0.38\textwidth}@{}}
\toprule
Sketch rows & Gaussian correlation & Rademacher correlation & Interpretation \\
\midrule
16 & 0.804 & 0.848 & coarse sketch preserves some Hessian-action direction \\
128 & 0.973 & 0.971 & substantially stronger local Hessian-action preservation \\
\bottomrule
\end{tabular}
\end{table}

\begin{table}[H]
\centering
\scriptsize
\caption{Representative Duhamel quadrature scaling. The \(8^2,N_p=5\) calibrated discretization is compared against a 64-node Gauss--Legendre reference with \(\lambda_HT_{\max}=1.388\).}
\label{tab:supp-quadrature}
\begin{tabular}{@{}L{0.13\textwidth}L{0.17\textwidth}L{0.18\textwidth}L{0.17\textwidth}L{0.25\textwidth}@{}}
\toprule
Rule & Low \(Q\) error & Higher \(Q\) error & Reference & Interpretation \\
\midrule
Midpoint & \(3.66\times10^{-3}\) at \(Q=1\) & \(3.45\times10^{-6}\) at \(Q=32\) & 64-node Gauss--Legendre & fixed-order quadrature must resolve Hamiltonian oscillations \\
Gauss--Legendre & -- & \(5.85\times10^{-10}\) at \(Q=4\) & same & higher-order quadrature can be much more efficient in this small test \\
\bottomrule
\end{tabular}
\end{table}

This fixed-Hamiltonian rule comparison is distinct from the
one-component refined-reference quadrature row in main Table 6.3; the
two diagnostics use different configurations and answer different
resolution questions.

\begin{figure}[H]
\centering
\includegraphics[width=0.96\textwidth]{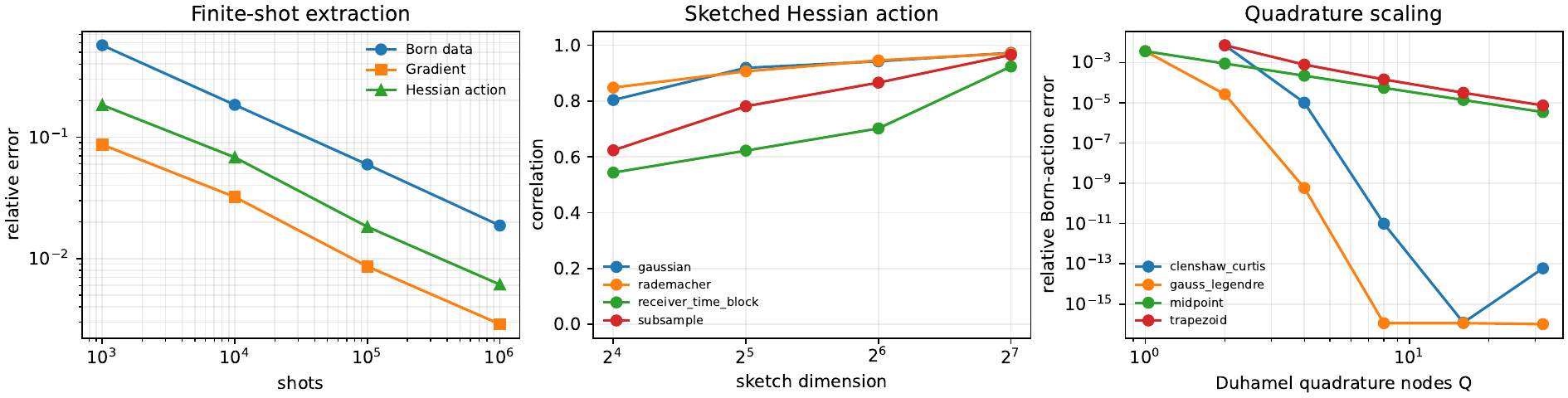}
\caption{Measurement, sketch and Duhamel-quadrature diagnostics for the calibrated pressure-observable calculations. These panels support the result-extraction discussion in the main text and are interpreted through the resource accounting in the main manuscript.}
\label{fig:supp-measurement}
\end{figure}

\paragraph{Compressed local updates for the same discrete map}
Table~\ref{tab:supp-compressed-update} evaluates preservation of the full
regularized local update in addition to a single Hessian-action direction.
The \(16^2\) calibrated map has 100 data samples and 256 model variables.
Each sketch family and dimension uses ten independent draws, and the table
reports medians. The update is compared with the full-data regularized
Gauss--Newton solve; the nonlinear column is the relative reduction in the
full calibrated pressure objective after the stored line search. The
resulting rows quantify local update preservation for the tested map,
residual, regularization, and sketch families.

\begin{table}[H]
\centering
\scriptsize
\caption{Compressed local-update diagnostics for the same discrete map.
Fractions are \(m/M\); all remaining columns are medians over ten
independent sketches.}
\label{tab:supp-compressed-update}
\begin{tabular}{@{}lrrrrr@{}}
\toprule
Sketch & \(m/M\) & Update correlation & Update relative error & Quadratic reduction & Nonlinear pressure reduction\\
\midrule
Rademacher & 0.25 & 0.9589 & 0.2801 & 0.9004 & 0.9393\\
Rademacher & 0.50 & 0.9822 & 0.1810 & 0.9174 & 0.9603\\
Rademacher & 0.75 & 0.9878 & 0.1668 & 0.9235 & 0.9641\\
Gaussian   & 0.25 & 0.9472 & 0.3086 & 0.8937 & 0.9337\\
Gaussian   & 0.50 & 0.9793 & 0.1972 & 0.9220 & 0.9652\\
Gaussian   & 0.75 & 0.9849 & 0.1661 & 0.9240 & 0.9649\\
Subsample  & 0.25 & 0.8115 & 0.5696 & 0.7593 & 0.7941\\
Subsample  & 0.50 & 0.9271 & 0.3603 & 0.9090 & 0.9504\\
Subsample  & 0.75 & 0.9809 & 0.1840 & 0.9269 & 0.9727\\
\bottomrule
\end{tabular}
\end{table}

The direct receiver-calibration block has norm \(1.370\) relative to the
full Jacobian in this test. Removing it changes the regularized
Gauss--Newton step by relative error \(1.296\) and reduces its
correlation with the calibrated step to \(0.068\). The resulting
full-data quadratic reduction is \(-2.101\), and the
calibrated-objective line search selects \(\alpha=0\). By contrast, the
complete calibrated step achieves a relative nonlinear
pressure-objective reduction of \(0.979\). These values quantify the
update-level effect of the receiver derivative in this local setup.

\begin{figure}[H]
\centering
\includegraphics[width=0.96\textwidth]{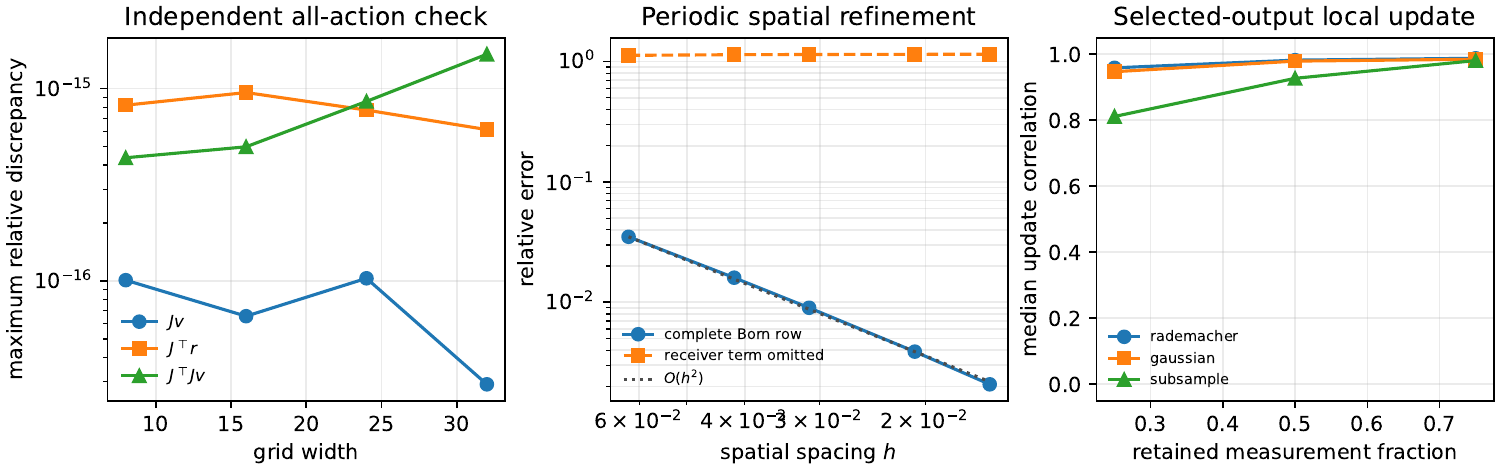}
\caption{Independent discrete-action agreement, smooth periodic
physical-grid convergence, and compressed local-update preservation.
The panels test algebraic correctness, the spatial convergence trend,
and retention of the regularized local update under selected-output
sketches.}
\label{fig:supp-targeted-evidence}
\end{figure}

\section*{S8. Classical reference studies}

This section gives compact classical references for interpreting the
pressure objective and image quality. CGLS
\cite{HestenesStiefel1952,Bjorck1996} consumes \(Jv\) and
\(J^\top r\) through the normal equations, while
L-BFGS~\cite{LiuNocedal1989LBFGS} provides a quasi-Newton pressure-fit
reference. Image quality is summarized by the structural similarity
index (SSIM)~\cite{Wang2004SSIM}. These task-specific discretizations
provide pressure-fit and image-metric references for interpreting the
local results.

\begin{table}[H]
\centering
\scriptsize
\caption{Classical matched-acquisition \(96^2\) Marmousi-derived
direct-action references. Both rows use augmented CGLS with \(Jv\) and
\(J^\top r\) actions; their discretizations are stated separately.}
\label{tab:supp-cgls96}
\begin{tabular}{@{}L{0.25\textwidth}L{0.23\textwidth}L{0.18\textwidth}L{0.20\textwidth}@{}}
\toprule
Grid/model & Operator/backend & Pressure objective & Image-metric outcome \\
\midrule
\(96^2\) Marmousi-derived & calibrated \(\pa\)-space CGLS & \(95.7\times\) reduction & SSIM change \(<4\times10^{-4}\) \\
\(96^2\) Marmousi-derived & unextended acoustic CGLS & \(63.2\times\) reduction & SSIM change \(<4\times10^{-4}\) \\
\bottomrule
\end{tabular}
\end{table}

\begin{table}[H]
\centering
\scriptsize
\caption{Classical acoustic quasi-Newton pressure-fit references on \(96^2\) grids.}
\label{tab:supp-lbfgs}
\begin{tabular}{@{}L{0.20\textwidth}L{0.22\textwidth}L{0.23\textwidth}L{0.18\textwidth}@{}}
\toprule
Case & Backend and settings & Pressure objective & Image-metric outcome \\
\midrule
Marmousi-derived \(96^2\) & same optimizer/control family & \(1.03\times10^{-5}\to2.97\times10^{-8}\) & best SSIM \(0.4552\), final \(0.4434\) \\
\bottomrule
\end{tabular}
\end{table}

The quasi-Newton entries reproduce the archived 12-step histories; the
complete trajectories and image panels remain in the authors'
regeneration archive.

\paragraph{Earlier unpublished internal learned-deblurring study}

Before the compiled circuit and hybrid inversion were available, an
earlier unpublished internal Marmousi study tested whether classically evaluated
pressure-equivalent target actions could train a U-Net
\cite{Ronneberger2015UNet} as a local deblurring preconditioner on the
\(221\times601\) Marmousi array. The target actions, network training,
line search, and model updates were all classical. The archived run
raised SSIM from 0.59294 to 0.68734; a same-acquisition classical
target-action reference reached 0.69959. This study records an earlier
classical use of the pressure-equivalent target action. The operator,
compiled-circuit, and hybrid-inversion results are established by the
experiments in Sections S2--S7.

\section*{S9. Run-to-result index}

Table~\ref{tab:supp-provenance} records the retained Marmousi-derived input and deterministic preprocessing. Table~\ref{tab:supp-repro-index} then maps each numerical claim to the stored summaries and scripts used during manuscript preparation. The archived statevector summaries were generated with Qiskit 2.4.1. A public repository and persistent DOI will follow initial submission.

\begin{table}[H]
\centering
\scriptsize
\caption{Archived model input and deterministic preprocessing used by the reported finite-dimensional diagnostics. File identity is fixed by the recorded dimensions, data type and SHA-256 digest. The Marmousi-derived input is matched to the public DeepFWIHessian repository accompanying Alfarhan et al.}
\label{tab:supp-provenance}
\begin{tabular}{@{}L{0.18\textwidth}L{0.25\textwidth}L{0.47\textwidth}@{}}
\toprule
Model family & Archived input & Selection and preprocessing \\
\midrule
Marmousi-derived & \href{https://github.com/DeepWave-KAUST/DeepFWIHessian/blob/5f9499748bc11f3ac354e1c148eed2f1f17c4e7a/data/Marm.bin}{DeepFWIHessian \texttt{data/Marm.bin}}, commit \texttt{5f9499748bc11f3a} & byte-identical to the public repository file accompanying Alfarhan et al.~\cite{Alfarhan2025}: 132,821 little-endian float32 values in C order, reshaped as \(221\times601\), SHA-256 \texttt{2f399b1a31eab87c}\allowbreak\texttt{f095711a1eb7b09d}\allowbreak\texttt{a9d6eee1bb4ae904}\allowbreak\texttt{5b8dc6e4963cc5a8}; the source paper uses 15 m spacing and cites the original Marmousi model \cite{Brougois1990}; the full array spans 1.484--5.695 km/s and is resized by first-order interpolation \\
Acquisition rule & generated from each target grid & source and receiver depths are stated per experiment; lateral source and receiver coordinates are equally spaced by the deterministic scripts; recorded steps are equally spaced over the simulated time interval \\
\bottomrule
\end{tabular}
\end{table}

\begin{table}[H]
\centering
\scriptsize
\caption{Run-to-result index for the reported numerical evidence.}
\label{tab:supp-repro-index}
\begin{tabular}{@{}L{0.25\textwidth}L{0.31\textwidth}L{0.32\textwidth}@{}}
\toprule
Bundle block & Contents & Tables/figures supported \\
\midrule
Compiled pressure-Born prototype & Structured preparation, 14-term Pauli evolution, 8-term derivative LCU, calibrated readout, finite-shot component checks, and basis-transpiled resources & main circuit section and hybrid summary table; supplement Section S2 \\
Four-parameter finite-shot VQLS & Ten predeclared seeds with finite-shot forward/Born overlaps, VQLS cost, scale, and signed-amplitude readout & main finite-shot inversion figure; supplement Table~\ref{tab:supp-formal-vqls} and Fig.~\ref{fig:supp-vqls-profile} \\
Eight-parameter controls & \(10^4\)--\(10^6\) shot sweep, receiver-calibration omission, and dense-observation mismatch & supplement Section S3 \\
\(4\times4\) block reconstruction & Four fixed quadrant parameters with ideal and ten-seed finite-shot reconstructions & supplement Section S4 and Fig.~\ref{fig:supp-block-image} \\
Calibrated Qiskit statevectors & \(4^2\), \(24^2\), \(42^2\), \(52^2\), \(64\times42\), and \(96\times28\) summaries & main statevector summary table; supplement Section S5 \\
Calibrated acoustic checks & \(8^2\), \(16^2\), matrix-free \(32^2\), differentiable \(64^2\), receiver ablation and separated convergence & main operator and convergence tables; supplement Section S6 \\
Measurement and sketching & Bernoulli statevector post-processing, compressed-observable sketches, and Duhamel-quadrature summaries & main measurement subsection; supplement Section S7 \\
Independent discrete-map evidence & manual tangent/discrete-adjoint and autodiff all-action checks; explicit-Jacobian and finite-difference cross-checks; smooth periodic spatial refinement & main independent-action table; supplement Tables~\ref{tab:supp-independent-all-actions} and \ref{tab:supp-periodic-spatial} \\
Selected-output local update & ten seeded Rademacher, Gaussian and subsampling sketches per setting on the same calibrated map; receiver-calibration update ablation & main measurement subsection; supplement Table~\ref{tab:supp-compressed-update} and Fig.~\ref{fig:supp-targeted-evidence} \\
Classical reference studies & \(96^2\) CGLS and L-BFGS pressure-fit records plus compact historical context & supplement Section S8; full histories remain in the author archive \\
Author-side regeneration archive & source files, bibliography, figure-generation commands, random seeds, JSON/NPZ summaries, SHA-256 manifest, run logs, and environment notes & regeneration record retained by the authors; availability is described in the Data and Code Availability statement \\
\bottomrule
\end{tabular}
\end{table}

\small
\bibliographystyle{siamplain}
\bibliography{references}